\numberwithin{equation}{section}
\def\be{\begin{equation}}
\def\ee{\end{equation}}
\def\bea{\begin{eqnarray}}
\def\eea{\end{eqnarray}}
\def\bz{\bar z}
\def\p{\partial}
\def\bp{\bar\partial}
\def\b{\beta}
\def\bcal_k{\mathcal B_k}
\newcommand{\NPhi}{{N_{\phi}}}
\def\({\left(}
\def\){\right)}
\def\g{{\rm g}}
\def\Im{{\rm Im\,}}
\def\m{{\rm m}}
\def\L{{\rm L}}
\def\eq{Eq.\ \eqref}
\newtheorem{maindefn}{{\sc Definition}}
\newtheorem*{maindefn*}{{\sc Definition}}
\newtheorem{theo}{{\sc Theorem}}[section]
\newtheorem{prop}[theo]{{\sc Proposition}}
\newenvironment{rem}{\medskip\noindent{\it Remark:\/} }{\medskip}
\newtheoremstyle{dotless}{}{}{\itshape}{}{\bfseries}{}{ }{}
\theoremstyle{dotless}
\begin{document}

\title[Laughlin states on higher genus Riemann surfaces]
{Laughlin states on higher genus Riemann surfaces}
\author[Semyon Klevtsov]{Semyon Klevtsov}

\maketitle

\begin{center}
{\small
\address{\it Universit{\"a}t zu K{\"o}ln,
Mathematisches Institut,\\ Weyertal 86-90, 50931 K{\"o}ln, Germany}
}
\end{center}

\begin{abstract} 
Considering quantum Hall states on geometric backgrounds has proved over the past few years to be a useful tool for uncovering their less evident properties, such as gravitational and electromagnetic responses, topological phases and novel geometric adiabatic transport coefficients. One of the transport coefficients, the central charge associated with the gravitational anomaly, appears as a Chern number for the adiabatic transport on the moduli spaces of higher genus Riemann surfaces. This calls for a better understanding of the QH states on these backgrounds. Here we present a rigorous definition and give a detailed account of the construction of Laughlin states on Riemann surfaces of genus $\g>1$. 
By the first principles construction we prove that the dimension of the vector space of Laughlin states is at least $\beta^\g$ for the filling fraction $\nu=1/\beta$. Then using the  path integral for the 2d bosonic field compactified on a circle, we reproduce the conjectured $\beta^\g$-degeneracy as the number of independent holomorphic blocks. We also discuss the lowest Landau level, integer QH state and its relation to the bosonization formulas on higher genus Riemann surfaces.
\end{abstract}

\thispagestyle{empty}
\tableofcontents


\section{Introduction}

Laughlin state \cite{L} is a trial $N$ particle wave function successfully approximating the ground state of the fractional quantum Hall effect for the values of the Hall conductance given by simple fractions $\nu=1/\beta$. Considering the Laughlin states and QH states for other filling fractions in geometric settings, e.g., on compact Riemann surfaces is a helpful tool for understanding their geometric and topological properties. While on the sphere the Laughlin ground state is unique \cite{H}, on the torus its $\beta$-fold degeneracy \cite{HR,YHL} provided one of the first examples of topological phases of matter. The $\beta^\g$-fold degeneracy of the Laughlin states on the genus $\g>1$ surface was first conjectured in Ref.\ \cite{WN} using the arguments from Chern-Simons theory, see also \cite{W1989} for related work on the topological degeneracy of chiral spin states on Riemann surfaces. 

Besides the topological degeneracy, another important application of QH states on Riemann surfaces is the geometric adiabatic transport on the associated moduli spaces, such as the moduli space of flat connections, complex structures, etc. This goes back to the observation that the Hall conductance is the first Chern class of the vector bundle of QH states over the moduli space of flat connections, i.e., Aharonov-Bohm phases \cite{TW, T2,ASZ}. The adiabatic transport on the moduli space of elliptic curves led to the discovery of another Chern number -- the odd Hall viscosity \cite{ASZ1, L1, R, TV2}. More recently the interest in the subject is connected with the investigation of the gravitational anomaly that QH states exhibit when coupled to a curved spatial metric, see e.g. Refs.\ \cite{K,CLW,FK,CLW1,LCW,BR1,KW,KMMW} for 2d and Refs.\ \cite{Son,AG1,AG2,BR,AG4,KMMW} for 2+1d perspectives. The adiabatic transport on the moduli space $\mathcal M_{\g}$ of complex structures of Riemann surface of $\g>1$ involves a novel transport coefficient, called the central charge \cite{KW,BR1,KMMW} associated with the gravitational anomaly. The central charge also controls the behaviour of QH states on singular surfaces \cite{CCLW,Kl16}, adiabatic transport on the moduli space $\mathcal M_{0,n}$ of sphere with $n$ punctures \cite{C16,CW17} and braiding of geometric defects in QH states \cite{BJQ,Gr}.  

In the present paper we construct the Laughlin states on a compact Riemann surface $\Sigma$ of genus $\g>1$. As we have already mentioned the early works on higher genus Laughlin states include Ref.\ \cite{WN}, where $\beta^\g$-fold degeneracy was conjectured and  Ref.\ \cite{BN} where closely related conformal blocks corresponding to geometric quantisation of $U(1)$ Chern-Simons theory were constructed. The construction of Laughlin states is also sketched in Ref.\ \cite{IL} (where a different degeneracy of the ground states was reported) and more recently in Ref.\ \cite{Gr}. Our goal here is to provide a detailed account of the construction, starting from one-particle states on $\Sigma$, then explaining the $\beta=1$ integer QH state via the bosonization formula and finally constructing the Laughlin states. 

In the original form \cite{L} the Laughlin wave function is defined for $N$ particles on the complex plane by the formula
\be\label{LaS}
\Psi(z_1,...,z_N)= \prod_{n<m}^N(z_n-z_m)^\beta\;\cdot e^{-\frac B4\sum_{n=1}^N|z_n|^2},\quad \beta\in\mathbb Z_+
\ee
where $B$ is the constant magnetic field. This wave function can be taken to the Riemann surface $\Sigma$ of any genus, where the appropriate formulation is in the language of line bundles and divisors. For the background on line bundles on Riemann surfaces we refer to \cite{Bost}. 

In physics terminology, the magnetic flux (often denoted $\phi$) through a surface is the integral of the normal component of the magnetic field $B$ on the surface. In the case of a closed surface this is quantized $\phi=2\pi \NPhi$, where the integer $\NPhi\in\mathbb Z_+$ is called the number of magnetic flux quanta penetrating the surface. Mathematically, the magnetic field with the flux $\NPhi$ is described by a magnetic field divisor \eq{degDm} of degree $\NPhi$, i.e., a formal sum of points on $\Sigma$,
$
D_\m=\sum_{a=1}^\NPhi q_a,
$
where we allow for coincident points. The one-particle wave functions on the lowest Landau level (LLL) are the holomorphic sections of the line bundle $\mathcal O(D_\m)$, see e.g. \cite{DK} for an extensive discussion. Let $s(z)$ be a holomorphic section with the zeroes located at the points $q_a$ with order of vanishing according to the multiplicity of the point in $D_\m$. All the degenerate LLL can then be constructed by multiplying $f\cdot s$ by the basis of meromorphic functions $f(z)\in H^0(\Sigma,\mathcal O(D_\m))$ \eqref{HO} with the divisor of poles contained in $D_\m$. 

In addition to the magnetic field divisor, the particles are in principle allowed to have an integer or half-integer gravitational spin $s$, via the coupling to the $2s$th tensor power of the spin line bundle $S_\delta^{2s}$, \eq{degDd} with the associated divisor $D_\delta$. In the case of the half-integer $s$, the corresponding wave functions will also be characterized by a choice of the spin structure $\delta$, see \S \ref{sstr}. Finally, magnetic field line bundle $\mathcal O(D_\m)$ can also include the flat component $L_\varphi$, corresponding to AB phases around the one-cycles of the Riemann surface. This line bundle is parameterized by a coordinate $\varphi$ on the Jacobian variety of the Riemann surface, \eq{varphi}. This data describes the line bundle $\rm L$ \eq{rmL} of total degree $\deg\L=\NPhi+2s(\g-1)$. 

The number of one-particle states on the LLL is then counted by the Riemann-Roch theorem \eqref{RRoch} as the dimension of the space of holomorphic sections,
\be\label{RRLLL}
\dim H^0(\Sigma,\rm L) =\deg\L+1-\g,
\ee
assuming $\deg\L$ is sufficiently large, see \eq{bound}. The Laughlin ground state corresponds to the situation of $N$ interacting (via repulsive Coulomb force) particles. The effect of the interaction manifests itself as follows. While the $N$ particle wave function is still holomorphic, the degree of $\L$ is assumed to be divisible by $\beta\in\mathbb Z_+$ and the number of particles is given by
$$
N=\frac1\beta\deg\L+1-\g.
$$
In physics terms one can loosely say that only about a fraction of available LLL states \eq{RRLLL} is occupied. Then the effect of the interaction is accounted for by a certain vanishing condition on the wave function, when any two of the particles meet.

In this setting we define the (holomorphic part of the) Laughlin state on $\Sigma^N=\Sigma\times\cdots\times\Sigma$ with coordinates $z_1,...,z_N$ as follows.
\begin{maindefn*} {\bf \ref{Lsdef}}
A Laughlin state $F(z_1,...,z_N)$ for the filling fraction $1/\beta$ on $\Sigma^N$ satisfies the following conditions:
\begin{enumerate}
\item Restriction $\pi_n F(z)$ to the $n$th component $\Sigma_n$ in $\Sigma^N$, i.e., $F(...,z_n,...)$ with all $z$'s but $z_n$ fixed, transforms as a holomorphic section of the line bundle $\L$, \eq{rmL}.
\item For any pair $z_n$, $z_m$, $n\neq m$, the restriction $\pi_{nm}F$ to $\Sigma_n\times\Sigma_m$ vanishes exactly to the order $\beta$ near the diagonal
\be\nonumber
\pi_{nm}F\sim (z_n-z_m)^\beta,
\ee
in terms of a local complex coordinate on $\Sigma$.
\item For $\beta$ odd $F$ is completely anti-symmetric and for $\beta$ even, completely symmetric on $\Sigma^N$.
\end{enumerate}
\end{maindefn*}
Since for any fixed $\beta$ the sum of any two Laughlin states is again a Laughlin state they form a vector space, which we denote as $\mathbb V_\beta$.
The first property above encodes the polynomial character of the holomorphic pre-factor in \eq{LaS} and the second property encodes the local vanishing properties, which shall be independent of the global geometry. The third property suggest that even and odd values of $\beta$ can be considered at once. Finally the exponential gaussian in \eq{LaS} corresponds to the hermitian norm on the line bundle $\rm L$, introduced in \eq{norm1}.

In Prop.\ \ref{indepst} we show that the following $\beta^\g$ Laughlin states are linearly independent, provided $N\geqslant\g$,
\begin{align}\nonumber
F_r(z_1,...,z_N)=\vartheta\left[\begin{array}{c}\scriptstyle\frac r\beta\\\scriptstyle0\end{array}\right]\Big(\beta{\textstyle \sum_{n=1}^N} z_n-\beta\Delta- {\rm div}\,\L,\beta\tau\Big)
\cdot \prod_{n<m}^N{E(z_n,z_m)}^\beta\cdot\prod_{n=1}^N\sigma(z_n)^{\frac1\g(\beta N+2s-\beta)}.
\end{align}
Here the multi-index $r\in(\mathbb N^+_{\leqslant\beta})^\g$ takes $\beta^\g$ different values $r=(1,...,\beta)^\g$, $\Delta$ is the vector of Riemann constants defined in \eq{riemannC} and the divisor ${\rm div}\, \L$ is given by \eq{divL}. 
The theta function with characteristics and the Prime form $E(z,y)$ are introduced in \S \ref{thet} and \S \ref{primeform} and $\sigma(z)$ is the holomorphic $\frac\g2$-differential introduced in \eq{sigmafunc}, following Ref.\ \cite{Fay92}.

In addition to the first principles construction from the definition above, our goal here is also to derive the Laughlin states from the path integral representation $\mathcal V\bigl(g,B,\{z_n\}\bigr)$, \eq{pathi}, via the correlation function of a string of vertex operators of the compactified boson in 2d. This is the well-known vertex operator construction of Ref.\ \cite{MR} applied in the geometric setting \cite{FK,KW}. Thus, assuming the Def.\ \ref{Lsdef} holds, we show that $\dim \mathbb V_\beta$ is at least $\beta^\g$, however we do not prove it is exactly equal to $\beta^\g$. 

We derive the following result for the path integral, Prop. \ref{nugB},
\be\nonumber
\mathcal V\bigl(g,B,\{z_n\}\bigr)=\mathcal N\cdot
\sum_{\varepsilon',\varepsilon''\in\{0,\frac12\}^\g}\;\sum_{r\in(\mathbb N^+_{\leqslant\beta})^\g}\;e^{4\pi i(\varepsilon',\varepsilon'')}\Vert F_r^{\varepsilon'+\delta',\varepsilon''+\beta\delta''}(z_1,...,z_N)\Vert_h^2,
\ee
where $\mathcal N$ is a certain $z$-independent factor. The summation here goes over the $4^\g$ different choices of the half-integer theta characteristics $(\varepsilon',\varepsilon'')$ relative to the reference spin structure $(\delta',\beta\delta'')$. There are $\beta^\g$ Laughlin states $F_r^{\varepsilon',\varepsilon''}$ for each choice of the spin structure, given simply by the shifts of $F_r$ by half integer points in $Jac(\Sigma)$, see \eq{laughlinst}. 

In contrast to the situation on the torus, the $\beta^\g$-fold degeneracy here does not correspond to the center-of-mass motion, as the latter is not well defined since $\Sigma$ is not an Abelian variety for $\g>1$. The degeneracy arises instead from the center-of-mass motion in the Jacobian variety, the $2\g$-dimensional torus $Jac(\Sigma)$, where $\Sigma^N$ is holomorphically embedded to by the Abel map \eq{abel}. The condition $N\geqslant\g$ ensures that the Abel map $\Sigma^N\hookrightarrow Jac(\Sigma)$ is onto by the Jacobi inversion theorem \cite[III.6.6]{FKra}. Otherwise the image of $\Sigma^N$ will have a codimension greater than zero and the degeneracy of Laughlin states will be less than $\beta^\g$. However, the quantum Hall regime implies $N\gg1$, so for our purposes here we assume the bound $N\geqslant \g$.

In \S \ref{intQH} we use the Fay's identity, \eq{boso}, to demonstrate that at $\beta=1$ the Laughlin states reduce to the Slater determinant of the one-particle states, \eq{LLLs},
\be\nonumber
\det s_n(z_m)|_{n,m=1}^N=C\cdot\vartheta\Big({\textstyle \sum_{n=1}^N} z_n-\Delta-{\rm div}\,\L,\tau\Big)\cdot \prod_{n<m}^NE(z_n,z_m)\cdot\prod_{n=1}^N\sigma(z_n)^{\frac1\g(N+2s-1)}.
\ee
We then explain how this identity arises from the bosonization formula of Refs.\ \cite{Fal,ABMNV,Kn1986,VV} relating the bosonic path integral $\mathcal V\bigl(g,B,\{z_n\}\bigr)$ to a correlation function of spin-$s$ fermions \eq{spinsferm}. 

As a final remark we point out that the path integral construction of sec.\ \ref{freef} corresponds to a CFT with a background charge and with the addition of the magnetic field term. The role of the latter  is to dress the conformal blocks with the hermitian metric of the magnetic line bundle. Then the partition function is not strictly speaking conformally-invariant and neither are the Laughlin states, since they correspond to a gapped ground state and depend the magnetic length. Interestingly, in one special case the conformal invariance in the bulk can be formally restored. This happens when the magnetic field divisor $D_\m$ is a multiple of the canonical divisor and $B$ is proportional to the scalar curvature $R$. 
\\

{\bf Acknowledgements}. 
I would like to thank Peter Zograf and Alexey Kokotov for useful discussions and the anonymous referees for helpful comments.
This work was partially funded by the Deutsche Forschungsgemeinschaft (DFG) -- project number 376817586, SFB TRR 191, RFBR grant 18-01-00926 and QM2 collaboration at the University of Cologne. The author also gratefully acknowledges the hospitality of Anton Alexeev and the University of Geneva during the initial stages of this project. 

\section{Background}

\subsection{Higher genus Riemann surfaces}

We begin with a collection of relevant facts from the theory of Riemann surfaces, Riemann theta functions and holomorphic line bundles. The standard references here are \cite{FKra,M,Fay73} as well as the reviews \cite{Bost,DP}. We mainly focus on the facts that will be used later in the text, in order to make the presentation relatively self-contained.

We consider a compact oriented genus-$\g$ Riemann surface $\Sigma$ and we will be mainly concerned with the case $\g>1$. Although a specific construction will not be of a particular importance, one standard way to describe $\Sigma$ is by the quotient $\mathbb H/\Gamma$, where $\mathbb H$ is the upper half plane and $\Gamma\in PSL(2,\mathbb R)$.

The homology group $H_1(\Sigma,\mathbb Z)$ is generated by $2\g$ one-cycles $a_1,...,a_{\g},b_1,...,b_{\g}$. The canonical choice of the basis in $H_1(\Sigma,\mathbb Z)$ is the symplectic basis, where the intersection numbers are given by
\begin{equation}\label{sympl}
a_j{\scriptstyle\#}\, b_l=\delta_{jl},\quad a_j{\scriptstyle\#}\,  a_l=b_j{\scriptstyle\#}\,  b_l=0,\quad j,l=1,...,\g,
\end{equation}
i.e., the intersection number of the cycles $a_j$ and $b_j$ is $\pm1$ depending on the orientation, while all other intersection numbers vanish, see Fig.\ \ref{fig:highergenus}.
The choice of a canonical homology basis together with a choice of the base point $P_0\in\Sigma$ defines a marking of the Riemann surface.

\begin{figure}[h]
\begin{center}
\begin{tikzpicture}[smooth cycle,scale=0.35]
\tikzset{->-/.style={decoration={
  markings,
  mark=at position .5 with {\arrow{>}}},postaction={decorate}}}
  \tikzset{-<-/.style={decoration={
  markings,
  mark=at position .5 with {\arrow{<}}},postaction={decorate}}}
\draw [black,very thick,bend right] (6,6.35) arc (50:310:3.5);
\draw [black,very thick,bend right] (10+9,1) arc (-130:130:3.5);
\draw [black,very thick,bend right] (10.01,0.99) edge (5.96,.96);
\draw [black,very thick,bend right] (6,6.35) edge (10.04,6.39);
\draw [black,very thick,bend left] (10+5,0.97) edge (6.0+4,.99);
\draw [black,very thick,bend left] (6+4,6.35) edge (10.04+5,6.39);
\draw [black,very thick,bend right] (11,3.7) edge (13.8,3.7);
\draw [black,very thick,bend left] (11.3,3.55) edge (13.5,3.55);
\draw [black,dashed,very thick,bend right] (10.01+9,0.99) edge (5.96+9,.96);
\draw [black,dashed,very thick,bend right] (6.1+9,6.35) edge (10.04+9,6.39);
\draw [black,very thick,bend right] (2,3.7) edge (4.8,3.7);
\draw [black,very thick,bend left] (2.3,3.55) edge (4.5,3.55);
\draw [black,very thick,bend right] (11+9,3.7) edge (13.8+9,3.7);
\draw [black,very thick,bend left] (11.3+9,3.55) edge (13.5+9,3.55);
\draw [red,very thick,bend left,-<-] (3.5,3.5) ellipse (70pt and 35pt);
\draw [green,very thick,bend left,-<-] (3.5+9,3.5) ellipse (70pt and 35pt);
\draw [violet,very thick,bend left,-<-] (3.5+18,3.5) ellipse (70pt and 35pt);
\draw [blue,very thick,bend right]  (3.5,0.2) edge [-<-] (3.5,3.3);
\draw [blue,dashed,very thick,bend right]  (3.5,3.3) edge [-<-] (3.5,0.2);
\draw [cyan,very thick,bend right]  (3.5+9,0.2) edge [-<-] (3.5+9,3.3);
\draw [cyan,dashed,very thick,bend right]  (3.5+9,3.3) edge [-<-] (3.5+9,0.2);
\draw [orange,very thick,bend right]  (3.5+18,0.2) edge [-<-] (3.5+18,3.3);
\draw [orange,dashed,very thick,bend right]  (3.5+18,3.3) edge [-<-] (3.5+18,0.2);        \draw[black] (4,4.6) node [above]{$b_1$};
\draw[black] (4.8,.5) node [above]{$a_1$};
\draw[black] (4+9,4.6) node [above]{$b_2$};
\draw[black] (4.8+9,.5) node [above]{$a_2$};
\draw[black] (4+18,4.6) node [above]{$b_\g$};
\draw[black] (4.8+18,.5) node [above]{$a_\g$};
        \draw[black] (8.15,3.55) node [above]{$P_0$};
                \fill [black] (8,3.6) circle (3pt);
\end{tikzpicture}
{\small \caption{Marking of the Riemann surface.}
\label{fig:highergenus}}
\end{center}
\end{figure}
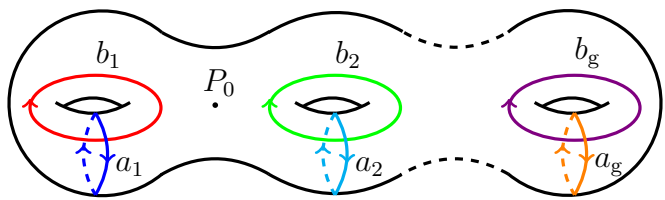

A harmonic one-form $\alpha$ satisfies
$d\alpha=d^c\alpha=0$, where the exterior derivative $d$ and its adjoint $d^c$ are defined as 
\be\label{ddc}
d=dz\wedge\p_z+d\bz\wedge\p_{\bz},\quad
d^c=id\bz\wedge\p_{\bz}-idz\wedge\p_z.
\ee
Harmonic forms provide a basis for the first cohomology group $H^1(\Sigma,\mathbb R)$. Let $\alpha_j,\beta_j\in H^1(\Sigma,\mathbb Z)$ be the basis of harmonic one-forms 
dual to the symplectic basis \eq{sympl} of one-cycles, i.e., normalized as 
\begin{equation}\label{harmf}
\int_{a_j}\alpha_l=\int_{b_j}\beta_l=\delta_{jl},\quad\int_{a_j}\beta_l=\int_{b_j}\alpha_l=0.
\end{equation}

The dimension of the vector space of holomorphic differentials on $\Sigma$ is equal to the genus of the surface $\g$. The canonical basis $\omega_l$ of holomorphic differentials can always be normalized so that 
\begin{equation}
\label{acycl}
\int_{a_j}\omega_l=\delta_{jl},\quad
\int_{b_j}\omega_l=\tau_{jl}
\end{equation}
and the matrix $\tau$ is called the period matrix of $\Sigma$. The period matrix is symmetric and its imaginary part is positive-definite
$$
\tau_{jl}=\tau_{lj},\quad \Im\tau>0. 
$$
This is an immediate consequence of the Riemann bilinear relation, stating that for any two closed differentials $\gamma_1$ and $\gamma_2$,
\be\label{r1}
\int_\Sigma\gamma_1\wedge\gamma_2=\sum_{j=1}^\g\;\int_{a_j}\gamma_1\cdot\int_{b_j}\gamma_2-\int_{b_j}\gamma_1\cdot\int_{a_j}\gamma_2,
\ee 
see e.g.\ \cite[III.3]{FKra} for the proof.

The harmonic one-forms are related to the holomorphic differentials by the following linear transformation
\begin{align}\label{alpha1}
\alpha=\frac i2\bar\tau(\Im\tau)^{-1}\omega-\frac i2\tau(\Im\tau)^{-1}\bar\omega,\quad
\beta=-\frac i2(\Im\tau)^{-1}\omega+\frac i2(\Im\tau)^{-1}\bar\omega,
\end{align}
where $(\Im\tau)^{-1}$ is the inverse matrix and the matrix/vector summation notation is understood from now on. 

Holomorphic differentials on the Riemann surface can be used to define the Abel map. The period matrix of $\Sigma$ generates a lattice 
$\Lambda\in\mathbb C^{\g}$,
$$
\Lambda=\{m'_j+\tau_{jl}m_l\,|\;m,m'\in\mathbb Z^\g\}.
$$
The Jacobian variety or Jacobian of the Riemann surface is the complex torus 
\begin{equation}\nonumber
Jac(\Sigma)=\mathbb C^\g/\Lambda.
\end{equation}
We define the Abel map $I[P],\,P\in\Sigma$ with respect to the base point $P_0\in\Sigma$ as the holomorphic map
\begin{align}\label{abel}\nonumber
I:\;&\Sigma\to Jac(\Sigma),\\
&P\to\left(\int_{P_0}^P\omega_1,...,\int_{P_0}^P\omega_\g\right)\mod\;\Lambda
\end{align}
This is well-defined, i.e. independent of the choice of the integration paths, since the period integrals in \eq{acycl} belong to the lattice $\Lambda$. The Abel map is a holomorphic embedding of the Riemann surface into $\g$-dimensional complex torus $\mathbb C^\g/\Lambda$. Let us note, that the Abel map is also well-defined on the divisors, i.e. on the formal sums of points with integer multiplicities 
\be\nonumber
D=\sum_a n_a\cdot P_a,\quad I[D]=\sum_a n_a\cdot I[P_a].
\ee
The degree of a divisor $D$ is defined as the sum of the multiplicities of points,
$$\deg D:=\sum_an_a.$$ 

Meromorphic functions on $\Sigma$ as well as holomorphic sections of line bundles, which will be defined later, can be characterized by the divisors of their poles and zeroes taken with multiplicities. Abel theorem \cite[vol.\ I, Prop.\ 2.4]{M} states that a divisor $D$ is a divisor of a meromorphic function $f$ on $\Sigma$ if and only if $\deg D=0$ and its image under the Abel map vanishes $I[D]=0\mod\Lambda$. Divisors of meromorphic functions are called principal and the notation is $D=(f)$.

Taking into account the Abel theorem it is customary to introduce the equivalence classes on the space of all divisors, where $D\equiv D'$ iff there exists a meromorphic function such that $D=D'+(f)$. In this case $D$ and $D'$ belong to the same divisor class $[D]$. We are mostly interested in positive divisors $D>0$  (also called integral or effective), i.e. those with all multiplicities positive $n_a>0$. Hence, we define the set $J_d$ 
\be\nonumber
J_d:=\{[D]\;|\; D>0,\deg D=d\},
\ee
of equivalence classes of positive divisors of degree $d$.

\subsection{Theta functions and theta divisor}
\label{thet}

Consider a coordinate vector $e\in\mathbb C^\g$ and the lattice $\Lambda=m'_j+\tau_{jl}m_l,\;m,m'\in\mathbb Z^\g$, where $\tau$ is symmetric complex matrix with positive definite imaginary part ${\rm Im}\,\tau>0$. Such matrices are parameterized by an open subset in $\mathbb C^{g(g+1)/2}$, called the Siegel upper-half plane.

Theta function is an analytic function of $e\in\mathbb C^\g$ defined by the series
\be\nonumber
\vartheta(e,\tau)=\sum_{n\in\mathbb Z^\g}e^{\pi i n\tau n+2\pi ine},
\ee
where $n \tau n:=\sum_{j,l}n_j\tau_{jl}n_l$ and $ne:=\sum_jn_je_j$.
This function is quasi-periodic with respect to the lattice shifts
\be\nonumber
\vartheta(e+m'+\tau m,\tau)=e^{-\pi i m\tau m-2\pi ime}\vartheta(e,\tau),\quad m,m'\in\mathbb Z^\g.
\ee
Theta function with characteristics is defined as follows
\be\label{deftheta}
\vartheta\Big[{\tiny\begin{array}{c}a\\b\end{array}}\Big](e,\tau)
=e^{\pi ia\tau a+2\pi ia(e+b)}\vartheta(e+a\tau+b,\tau),\quad a,b\in\mathbb R^\g,
\ee
and its quasi-periodicity properties are given by
\begin{equation}\label{qp1}
\vartheta\Big[{\tiny\begin{array}{c}a\\b\end{array}}\Big](e+m'+\tau m,\tau)=e^{-\pi i m\tau m-2\pi ime+2\pi i(am'-bm)}\vartheta\Big[{\tiny\begin{array}{c}a\\b\end{array}}\Big](e,\tau),
\end{equation}

Theta functions are naturally associated with the Jacobian variety of a Riemann surface upon identification of $\tau$ with the period matrix \eq{acycl}. Moreover, the zero set of $\vartheta(e)$ in $\mathbb C^\g$, called the theta divisor, can be described in terms of the set $J_{\g-1}$ of positive divisors of degree $\g-1$ on $\Sigma$. 

\begin{rem}
In what follows we often suppress the period matrix from the notation for the theta function and shorten it to $\vartheta(z)$, unless the full notation is necessary. 
\end{rem} 

\begin{prop}\label{thetadivisor}\cite[\S VI.3.1]{FKra}
Let $e\in \mathbb C^\g$. Then $\vartheta(e)=0$ iff there exists a divisor $D\in J_{\g-1}$ satisfying
$$
e=I[D]+\Delta.
$$
\end{prop}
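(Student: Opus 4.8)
The plan is to follow the classical route via the Riemann vanishing theorem, in the form presented in \cite[\S VI]{FKra}. First I would recall the key auxiliary construction: for a divisor $D=P_1+\dots+P_{\g-1}\in J_{\g-1}$, one considers the function $z\mapsto\vartheta(I[z]-I[D]-\Delta)$ on $\Sigma$ (or rather on the universal cover). Either this function vanishes identically, or it has exactly $\g$ zeros on $\Sigma$, counted with multiplicity --- this count comes from integrating $d\log\vartheta$ around the boundary of the fundamental polygon and using the quasi-periodicity relation for $\vartheta$, which contributes $\g$ via the $b$-periods. The Riemann vanishing theorem identifies those $\g$ zeros: if the function does not vanish identically and its zero divisor is $Q_1+\dots+Q_\g$, then $\sum Q_k \equiv D + P_0'$ for a suitable point, or more precisely $I[Q_1+\dots+Q_\g] = I[D] + I[P_\g']$ in a way that pins down $\Delta$ as the vector of Riemann constants \eqref{riemannC}. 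This is exactly the content that makes $\Delta$ the correct shift.

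Next I would prove the two implications. For the ``if'' direction: given $D\in J_{\g-1}$, I want $\vartheta(I[D]+\Delta)=0$. Apply the Riemann vanishing theorem to a generic divisor $D'=P_1+\dots+P_{\g-1}$ in general position (so the associated theta function on $\Sigma$ is not identically zero); its $\g$ zeros $Q_1,\dots,Q_\g$ satisfy $\vartheta(I[Q_k]-I[D']-\Delta)=0$, and one of them can be arranged so that $I[Q_1+\dots+Q_{\g-1}] = -I[D']$ plus lattice, making the remaining zero $Q_\g$ arbitrary; evaluating at that point and using $\vartheta(e)=\vartheta(-e)$ gives a point of the form $I[\tilde D]+\Delta$ in the zero set with $\tilde D\in J_{\g-1}$ effective. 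A density/continuity argument (the generic $D'$ sweep out a dense subset of $J_{\g-1}$, and $\vartheta$ is continuous, hence its zero set is closed) then upgrades this to \emph{all} of $J_{\g-1}$. For the ``only if'' direction: suppose $\vartheta(e)=0$. Choose points $P_1,\dots,P_{\g-1}$ so that $f(z):=\vartheta(I[z]-e)$ does not vanish identically --- possible since $\vartheta\not\equiv 0$ as a function on $\mathbb C^\g$. Then $f$ has $\g$ zeros $z_1,\dots,z_\g$; by Riemann vanishing, $I[z_1+\dots+z_\g] = e + \Delta'$ for the appropriate constant, and since $z=P_0$ (or the relevant base configuration) forces $\vartheta(-e)=\vartheta(e)=0$, one of the zeros is accounted for and $e = I[D]+\Delta$ with $D=z_1+\dots+z_{\g-1}+(\text{correction})\in J_{\g-1}$.

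The main obstacle I anticipate is bookkeeping the Riemann constant $\Delta$ correctly: the precise statement of the Riemann vanishing theorem involves $\Delta$ defined by an explicit integral formula \eqref{riemannC} over the boundary of the fundamental polygon, and matching the ``number of zeros equals $\g$'' computation with the Abel-map image of those zeros requires carefully tracking the base point $P_0$ and the lattice ambiguities mod $\Lambda$. In particular one must verify that the correction term in ``$z_1+\dots+z_\g \equiv D + (\text{point})$'' is exactly absorbed into $\Delta$ and does not leave a residual shift. I would handle this by citing the normalization in \cite[\S VI.3]{FKra} directly rather than rederiving it, so that the role of the proposition here is really just to record the statement; the substantive analytic input (counting zeros via the argument principle and quasi-periodicity, plus the continuity/density argument for the ``if'' direction) is then short and self-contained.
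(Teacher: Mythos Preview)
The paper does not prove this proposition at all: it is stated with a citation to \cite[\S VI.3.1]{FKra} and used as a black box. So there is no ``paper's own proof'' to compare against; your closing instinct --- that ``the role of the proposition here is really just to record the statement'' --- is exactly what the paper does.

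That said, your sketch of the underlying argument is the standard one and is broadly correct, but a couple of points are garbled and would need tightening if you actually wrote it out. In the ``only if'' direction you introduce auxiliary points $P_1,\dots,P_{\g-1}$ that play no role: the function $f(z)=\vartheta(I[z]-e)$ already depends only on $e$ and the base point $P_0$. The clean argument is that if $f\not\equiv 0$ then its $\g$ zeros $z_1,\dots,z_\g$ satisfy $\sum I[z_j]=e-\Delta$ by Prop.~\ref{RV}, and since $f(P_0)=\vartheta(-e)=\vartheta(e)=0$ one of the $z_j$ is $P_0$ itself, leaving $e=I[z_1+\dots+z_{\g-1}]+\Delta$; the case $f\equiv 0$ requires a separate induction on the order of vanishing. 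In the ``if'' direction your density/continuity argument is fine in spirit, but the sentence ``one of them can be arranged so that $I[Q_1+\dots+Q_{\g-1}]=-I[D']$ plus lattice'' is not right as stated --- the zeros $Q_k$ are determined by $D'$, not arrangeable. The usual route is either the dimension/irreducibility argument (theta divisor and $I[J_{\g-1}]+\Delta$ are both codimension-one in $Jac(\Sigma)$, one contains the other, theta divisor is irreducible) or the more careful zero-tracking in \cite[\S VI.3]{FKra}.
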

The vector $\Delta$ is called the vector of Riemann constants. It has the following form
\be\label{riemannC}
\Delta_j=\frac12+\frac12\tau_{jj}-\sum_{\substack{l=1\\l\neq j}}^\g\int_{a_l}\left(\omega_l(z)\int_{P_0}^z\omega_j\right),
\ee
and it depends on the choice of the point $P_0\in\Sigma$, which will sometimes be indicated explicitly by the superscript $\Delta^{P_0}$. The following transformation formula holds
\be\label{riemannCz}
\Delta^z=\Delta^{P_0}+(\g-1)\int_{P_0}^z\omega.
\ee
Theta functions can be also used to explicitly construct meromorphic functions and holomorphic sections on the Riemann surface $\Sigma$. 
Consider the following theta function
\begin{equation}
\label{fP}
\vartheta\left(\int_{P_0}^z\omega-e\right)
\end{equation}
as a function of a point $z\in \Sigma$, for an arbitrary fixed vector $e\in \mathbb C^\g$ and fixed base point $P_0\in\Sigma$. 
This function is well-defined locally on $\Sigma$, but multi-valued globally. 
It remains invariant around the $a$-cycles and transforms multiplicatively under the continuation around the $b_j$-cycle,
\begin{equation}\nonumber
\vartheta\left(\int_{P_0}^z\omega-e+\int_{b_j}\omega\right)=e^{-\pi i\tau_{jj}-2\pi i\left(\int_{P_0}^z\omega_j-e_j\right)}\,\vartheta\left(\int_{P_0}^z\omega-e\right),
\end{equation}
for a fixed index $j$.
Since the multiplicative factor is non-vanishing the zeroes of the function \eqref{fP} are well-defined on $\Sigma$. They are described by the Riemann vanishing theorem as follows.
\begin{prop}\label{RV} \cite[vol.\ I, Thm.\ 3.1]{M}
Theta function \eq{fP} either vanishes identically on $\Sigma$ or has $\g$ zeroes (counting with multiplicities) $P_1,...,P_\g$. In the latter case there exists a vector $\Delta\in\mathbb C^\g$, \eq{riemannC}, such that
\be\label{riemvan}
\sum_{j=1}^\g\int_{P_0}^{P_j}\omega=e-\Delta\;\mod\Lambda.
\ee
\end{prop}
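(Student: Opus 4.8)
The plan is to prove this by the argument principle on the surface cut open along a canonical dissection. Fix the marking based at $P_0$ and cut $\Sigma$ along the cycles $a_1,b_1,\dots,a_\g,b_\g$ to obtain a simply connected $4\g$-gon $\Sigma_0$ whose positively oriented boundary reads $\partial\Sigma_0=\prod_{j=1}^\g a_jb_ja_j^{-1}b_j^{-1}$, with all vertices mapping to $P_0$; after a small generic deformation of the dissection we may assume no zero of \eqref{fP} lies on $\partial\Sigma_0$. On $\Sigma_0$ the integral $g_k(z):=\int_{P_0}^z\omega_k$ is single-valued, hence so is $f(z)=\vartheta\bigl(g(z)-e\bigr)$; it is holomorphic and, unless it vanishes identically (the case where $g(\Sigma)-e$ lies in the theta divisor), it has finitely many zeros $P_1,\dots,P_\g$ in the interior of $\Sigma_0$. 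This already accounts for the dichotomy in the first assertion.

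The first step is the zero count. By the argument principle the number of zeros with multiplicity is $N=\frac1{2\pi i}\oint_{\partial\Sigma_0}d\log f$. Across the cut $a_j$ the function $f$ is multiplied by the $b_j$-multiplier $M_j(z)=e^{-\pi i\tau_{jj}-2\pi i(g_j(z)-e_j)}$ displayed just before the statement, so the contributions of the two copies $a_j$ and $a_j^{-1}$, traversed in opposite directions, combine into $-\int_{a_j}d\log M_j=\int_{a_j}2\pi i\,\omega_j=2\pi i$ by \eqref{acycl}; across the cut $b_j$ the function $f$ is invariant (integer shift of the argument of $\vartheta$), so the pair $b_j,b_j^{-1}$ contributes nothing. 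Summing over $j$ gives $N=\g$.

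The second step locates the zeros. For each $k$ the weighted argument principle gives, modulo $\Lambda$ (the antiderivative $g_k$ being defined only up to periods),
\[
\sum_{j=1}^\g\int_{P_0}^{P_j}\omega_k=\sum_{j=1}^\g g_k(P_j)=\frac1{2\pi i}\oint_{\partial\Sigma_0}g_k(z)\,d\log f(z).
\]
I would evaluate the right-hand side edge by edge. On the pair $a_j,a_j^{-1}$ one uses both the multiplier $M_j$ of $f$ and the shift $g_k\mapsto g_k+\tau_{jk}$ incurred across $b_j$; after cancellation the surviving contribution is $\int_{a_j}g_k\,\omega_j$, while the terms $\tau_{jk}$ and $\tau_{jk}\cdot(\text{integer})$ assemble into lattice vectors and drop mod $\Lambda$. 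On the pair $b_j,b_j^{-1}$ one uses the integer shift of $g_k$ across $a_j$ together with the $b_j$-multiplier of $f$, and the net effect is supported on the index $k=j$ and comes from $\int_{b_j}d\log f$, i.e.\ from the constant $-\pi i\tau_{jj}+2\pi i e_j$ in $\log M_j$. Collecting everything: the linear-in-$e$ pieces assemble to $e_k$; the diagonal term $\int_{a_k}g_k\,\omega_k=\int_{a_k}g_k\,dg_k$ evaluates at the endpoints to $\frac12$ and, together with the $\tau_{kk}$ contributions, produces $-\frac12-\frac12\tau_{kk}$; and the off-diagonal pieces produce $\sum_{l\neq k}\int_{a_l}\bigl(\omega_l(z)\int_{P_0}^z\omega_k\bigr)$. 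Comparing with \eqref{riemannC}, this is exactly $e_k-\Delta_k$, which is the claimed identity.

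The main obstacle is the bookkeeping in the second step: fixing consistently the orientations of the paired boundary arcs, keeping track of the vertex values of $g_k$ where consecutive edges meet (these are lattice points and must be shown to cancel in the final vector), separating cleanly the pieces that lie in $\Lambda$ from those that survive, and — most delicate — checking that the surviving constant reproduces term by term the expression \eqref{riemannC} for the vector of Riemann constants, with the correct $\frac12+\frac12\tau_{jj}$ and the off-diagonal sum, rather than merely some other representative of the same class in $Jac(\Sigma)$. A minor additional point is to confirm independence of the auxiliary choices: a change of base point shifts both $\sum_j\int_{P_0}^{P_j}\omega$ and $e-\Delta$ by the same amount $\g\int_{P_0}^z\omega$, consistently with \eqref{riemannCz}.
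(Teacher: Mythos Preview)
The paper does not supply its own proof of this proposition; it is stated with a citation to Mumford \cite[vol.~I, Thm.~3.1]{M} and then used as a black box. There is therefore nothing in the paper to compare against beyond the reference.

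Your argument is the classical one (essentially the proof in Mumford and in Farkas--Kra): cut $\Sigma$ open to a $4\g$-gon, apply the ordinary argument principle to get the zero count $\g$, and then the weighted argument principle with weight $g_k(z)=\int_{P_0}^z\omega_k$ to locate the zeros. Your identification of which multiplier appears on which pair of edges is correct (the pair $a_j,a_j^{-1}$ differs by a $b_j$-loop, hence by the nontrivial multiplier $M_j$; the pair $b_j,b_j^{-1}$ differs by an $a_j$-loop, hence $f$ is unchanged), and your sketch of how the pieces reassemble into $e_k-\Delta_k$ is accurate. The one place to be careful, which you correctly flag, is that on the $j$th block of edges the antiderivative $g_k$ does not start at $0$ but at a lattice point accumulated from the preceding edges; for the diagonal term this gives $\int_{a_k}g_k\,dg_k=c+\tfrac12$ with $c\in\mathbb Z$, so only $\tfrac12$ survives mod $\Lambda$, exactly as you say. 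With that bookkeeping carried out explicitly, the proof is complete and matches the cited source.
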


Now we take a point $e\in\mathbb C^\g$ such that $\vartheta(e)=0$, and consider \eqref{fP} 
as a function on $\Sigma\times\Sigma$,
\be\label{e}
f_{e}(z,y)=\vartheta\left(\int_z^y\omega-e\right),
\ee
The following description of zeroes in $z$ and $y$ of this function is a consequence of the Riemann vanishing theorem. 
\begin{prop}\label{van}\cite[vol. I, Lem.\ 3.4]{M}
Let $e\in \mathbb C^\g$ such that $\vartheta(e)=0$ and $f_{e}(z,y)\not\equiv0$. 
Then there are $2\g-2$ points (counting with multiplicities) $P_1,...,P_{\g-1},Q_1,...,Q_{\g-1}\in \Sigma$ such that
$f_e(z,y)=0\iff (z=y),\;{\rm or}\; z=P_j, \forall y\in \Sigma\;{\rm or}\; y=Q_j, \forall z\in \Sigma$.
\end{prop}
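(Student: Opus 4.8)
The plan is to reduce Proposition~\ref{van} to the Riemann vanishing theorem (Prop.~\ref{RV}) applied to $f_e(z,y)=\vartheta\bigl(\int_z^y\omega-e\bigr)$ regarded, for each fixed value of one argument, as a function of the other on $\Sigma$. Fix $y\in\Sigma$ and set $g_y(z):=f_e(z,y)=\vartheta\bigl(\int_z^y\omega-e\bigr)$. Writing $\int_z^y\omega=\int_{P_0}^y\omega-\int_{P_0}^z\omega$, this is of the form \eqref{fP} with the role of the base point played by $P_0$ and the fixed vector replaced by $e':=e-\int_{P_0}^y\omega$, up to the sign of the $z$-integral; after accounting for the evenness/oddness bookkeeping of $\vartheta$ this still falls under Prop.~\ref{RV}. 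Hence either $g_y\equiv0$ on $\Sigma$, or $g_y$ has exactly $\g$ zeroes $z=R_1(y),\dots,R_\g(y)$ with $\sum_j\int_{P_0}^{R_j(y)}\omega=e'-\Delta=e-\Delta-\int_{P_0}^y\omega\pmod\Lambda$. One of these $\g$ zeroes is always $z=y$ itself, since $f_e(y,y)=\vartheta(-e)=\pm\vartheta(e)=0$ by the hypothesis $\vartheta(e)=0$ and the parity of $\vartheta$; so $g_y$ has $\g-1$ further zeroes.

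Next I would show that the remaining $\g-1$ zeroes do not actually depend on $y$. The clean way is to use Prop.~\ref{thetadivisor} (description of the theta divisor): $\vartheta(e)=0$ means $e=I[D]+\Delta$ for some effective divisor $D=P_1+\cdots+P_{\g-1}\in J_{\g-1}$. Substituting into the vanishing locus condition for $g_y$ gives $\sum_j I[R_j(y)] = I[D]-I[y]\pmod\Lambda$, i.e. $I\bigl[\,y+\sum_j R_j(y)\,\bigr]=I[D]$, and since one of the $R_j(y)$ equals $y$, the remaining $\g-1$ zeroes $R'_1(y),\dots,R'_{\g-1}(y)$ satisfy $I\bigl[\sum_i R'_i(y)\bigr]=I[D]$. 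For a generic $e$ on the theta divisor the divisor $D$ is the unique effective divisor of degree $\g-1$ in its class (this is exactly the statement that the Abel--Jacobi map on $J_{\g-1}$ is generically injective, equivalently $D$ is non-special), so $\sum_i R'_i(y)=D=\sum_j P_j$ as divisors, independent of $y$. Thus for generic $y$ (and generic $e$) the zero set of $z\mapsto f_e(z,y)$ is $\{y\}\cup\{P_1,\dots,P_{\g-1}\}$. By symmetry of the argument in the two variables — run the same reasoning with $z$ fixed and $y$ varying — one gets $\g-1$ points $Q_1,\dots,Q_{\g-1}$ so that for generic $z$ the zero locus in $y$ is $\{z\}\cup\{Q_1,\dots,Q_{\g-1}\}$.

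It then remains to upgrade the generic statement to an exact description of the full zero set of $f_e$ on $\Sigma\times\Sigma$. Here I would argue as follows: the three loci $\{z=y\}$, $\{z=P_j\}\times\Sigma$ (for each $j$) and $\Sigma\times\{y=Q_j\}$ are each contained in the zero divisor $Z(f_e)$ of $f_e$ — the diagonal because $f_e(y,y)=\pm\vartheta(e)=0$, and the vertical/horizontal lines because they show up as zeroes for generic fixed values and $Z(f_e)$ is closed. Conversely, $Z(f_e)$ is a divisor on $\Sigma\times\Sigma$ whose restriction to a generic vertical fibre has degree $\g$ (namely $y$ plus the $P_j$), so $Z(f_e)$ has "vertical degree" $\g$; since the diagonal contributes $1$ and each $\{z=P_j\}\times\Sigma$ contributes $1$ to that count, there is no room for any further component, and likewise in the horizontal direction. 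Counting with multiplicities (coincidences among the $P_j$, or a $P_j$ equal to some $Q_i$, or the $P_j$ colliding with the diagonal) is handled by the same degree bookkeeping. This yields exactly the claimed equivalence $f_e(z,y)=0\iff z=y\ \text{or}\ z=P_j\ \text{or}\ y=Q_j$.

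The main obstacle I anticipate is the passage from "generic" to "all": making precise that $Z(f_e)$ has no extra components and that the multiplicity count is exact requires a little care, because a priori $f_e(z,y)$ could vanish to higher order along the diagonal or one of the lines, or the $P_j$ could be non-distinct, in which case "the $\g-1$ points $P_1,\dots,P_{\g-1}$" must be read with multiplicity and one must check the Riemann vanishing count still allocates all $\g$ zeroes. The cleanest route is probably to treat $f_e$ as a holomorphic section of an explicit line bundle on $\Sigma\times\Sigma$ (a pullback of a theta line bundle twisted by the diagonal), compute its class, and match it against the classes of $\Delta_{\Sigma}$, $\{P_j\}\times\Sigma$ and $\Sigma\times\{Q_j\}$; a Chern-class / intersection-number comparison then forces equality of divisors with no leftover. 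Alternatively one invokes Prop.~\ref{RV} fibrewise together with the fact that a family of degree-$\g$ divisors on $\Sigma$ parametrised holomorphically by $y$, one of whose points is pinned to $y$ and whose Abel--Jacobi image is $I[D]-I[y]$, must have its other $\g-1$ points constant when $D$ is non-special — and then separately dispose of the special locus of $e$ by a density/continuity argument, noting both sides of the claimed equivalence vary "continuously" in $e$.
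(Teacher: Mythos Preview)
The paper does not give its own proof of this proposition; it is stated as a citation to Mumford \cite[vol.\ I, Lem.\ 3.4]{M} and used without proof. So there is no ``paper's proof'' to compare against. Your outline is essentially the standard argument in Mumford: apply the Riemann vanishing theorem fibrewise, observe that the diagonal always contributes one zero because $\vartheta(e)=0$, and show that the remaining $\g-1$ zeroes in each variable are independent of the other.

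One concrete slip: your sign bookkeeping is off. Writing $I[P]=\int_{P_0}^P\omega$, one has $g_y(z)=\vartheta(I[y]-I[z]-e)=\vartheta(I[z]-(I[y]-e))$ by evenness, so Prop.~\ref{RV} gives
\[
\sum_{j=1}^{\g} I[R_j(y)] \;=\; I[y]-e-\Delta \;=\; I[y]-I[D]-2\Delta,
\]
and after removing the zero at $z=y$ the remaining $\g-1$ zeroes satisfy $\sum_i I[R'_i(y)]=-I[D]-2\Delta=I[C-D]$, not $I[D]$. So the $P_j$'s are the points of the effective representative of $C-D$ (which exists and is unique for generic $e$ since $h^0(C-D)=h^0(D)=1$), while it is the $Q_j$'s on the $y$-side that sit at the points of $D$. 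This does not break your argument, only the labeling.

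Your worry about the passage from ``generic'' to ``all'' is well placed and is exactly where the work lies in Mumford's proof; in particular one must rule out that $g_y\equiv 0$ for some special $y$ (the hypothesis $f_e\not\equiv 0$ does not immediately exclude this fibrewise). Your proposed divisor-class count on $\Sigma\times\Sigma$ is a clean way to close this, and is in the spirit of Mumford's treatment.
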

The function $f_e(z,y)$ will be used in the construction of the Prime form in \S\ref{primeform}. Before defining the Prime form we introduce theta characteristics.

\subsection{Theta characteristics and spinors}
\label{sstr}

Consider half-periods of the period lattice $\Lambda$
\be\nonumber
\delta=\delta'+\delta'' \tau, \quad \delta',\delta''\in \big\{0,{\textstyle\frac12}\big\}^\g,
\ee
These are called half-integer characteristics or theta characteristics. It follows from the definition \eq{deftheta} that under the reflection $e\to-e$ in $\mathbb C^\g$ the theta function transforms as
\be\label{oddtheta}
\vartheta\Big[{\tiny\begin{array}{c}\delta'\\\delta''\end{array}}\Big](e)=e^{-4\pi i\delta'\delta''}\vartheta\Big[{\tiny\begin{array}{c}\delta'\\\delta''\end{array}}\Big](-e),\quad {\rm where}\;\;\delta'\delta'':=\sum_{j=1}^\g\delta'_j\delta''_j
\ee
Depending on parity of $4\delta'\delta''$ the theta characteristic is called odd or even. Among $4^\g$ different theta characteristics there are $2^{\g-1}(2^\g+1)$ even and $2^{\g-1}(2^\g-1)$ odd ones. It follows from \eq{oddtheta} that $\vartheta[\delta](0)=0$ for all odd theta characteristics. 

By analogy with the meromorphic functions the divisor $(\eta)$ of an Abelian (meromorphic) differential $\eta(z)$ can also be defined as a formal sum of its poles and zeroes with multiplicities. Writing $\eta=f(z)dz$ in terms of a local holomorphic coordinate the orders of poles and zeroes are well defined and the divisor is a sum of distinct poles and zeroes of $f(z)$ over all coordinate charts. 
Since the ratio of any two Abelian differentials is a meromorphic function, by Abel theorem all Abelian differentials belong to the same divisor class $C$, called the canonical class. The image of the canonical divisor under the Abel map satisfies the following relation, see \cite[\S VI.3.6]{FKra},
$$
I[C]=-2\Delta,
$$
where $\Delta$ is the vector of Riemann constants \eqref{riemannC}.  

An immediate consequence of this result and Prop.\ \ref{thetadivisor} is that
there exists a positive divisor $D_\delta$ of degree $\g-1$,
\be\label{Ddelta}
D_\delta=\sum_{\alpha=1}^{\g-1}p_\alpha
\ee
corresponding to each {\it odd} theta characteristic, satisfying $2D_\delta\equiv C$ and
\be\label{del}
\delta=I[D_\delta]+\Delta.
\ee 
It follows that there exists holomorphic differential $\omega_\delta$ with the divisor $(\omega_\delta)=2D_\delta$, i.e., with the double zeroes exactly on $D_\delta$. This differential can be constructed using the theta functions in the case of so called non-singular theta characteristic.

Theta characteristic $\delta_*$ is called non-singular, if there exists non-vanishing partial derivative of $\vartheta$ at $\delta_*$, $\frac{\p\vartheta}{\p e_j}(\delta_*)\neq0$, at least for some $j$'s. Existence of non-singular odd theta characteristic follows from the Lefschetz embedding theorem, see e.g.\ \cite[vol.\ II, p.\ 3.208]{M}. Let $D_{\delta_*}$ is the divisor \eq{Ddelta}, corresponding to $\delta_*$. Then the holomorphic differential $\omega_{\delta_*}$, with the divisor $(\omega_{\delta_*})=2D_{\delta_*}$ is given by
\be\label{omegadelta}
\omega_{\delta_*}(z)=\sum_{j=1}^\g\frac{\p\vartheta[\delta_*]}{\p e_j}\;\omega_j(z),
\ee
where $\omega_j$ is the canonical basis of holomorphic differentials. 

The half-differential or spinor, is a holomorphic object $\sqrt{\omega_{\delta_*}}$ with the divisor $D_{\delta_*}$ and such that $\sqrt{\omega_{\delta_*}}^2=\omega_{\delta_*}$. This is a holomorphic section of the line bundle $S$ of the degree $\g-1$, satisfying $S^2=K$, where $K$ is the canonical line bundle, i.e., holomorphic line bundle of Abelian differentials. The canonical line bundle corresponds to the canonical divisor class $C$. Given a half-integer characteristic $\varepsilon=(\varepsilon',\varepsilon'')$, the analog of the relation \eqref{del} holds
\be\label{spineps}
\varepsilon=I[D_\varepsilon]+\Delta,
\ee 
where $D_\varepsilon$ is (not generically positive) divisor of degree $\g-1$, defining the spin bundle $S_\varepsilon$ with spin structure $(\varepsilon',\varepsilon'')$. There are $4^\g$ non-isomorphic spin bundles on the Riemann surface of genus $\g$, by the number of half-integer characteristics.

\subsection{Prime form}
\label{primeform}

The idea is to construct a holomorphic function $E(z,y)$ on $\Sigma\times\Sigma$ with a unique simple zero on the diagonal $z=y$. This is an analog of the linear function $z-y$ on $\mathbb P^1\times\mathbb P^1$ for a genus $\g>1$ Riemann surface. 

The function $f_e(z,y)$ \eq{e} vanishes at $z=y$, but it has extra $\g-1$ zeroes according to Prop.\ \ref{van}. Take a non-singular theta characteristic $\delta_*$ and set $e=\delta_*$. Then it follows from \eq{riemvan} and \eq{del} that in this case the extra $\g-1$ zeroes belong precisely to the divisor $D_\delta$. Hence they can be cancelled by dividing by $\sqrt{\omega_{\delta_*}}$ in each variable.

The resulting object
\be\label{pform}
E(z,y)=\frac{\vartheta\Big[{\tiny\begin{array}{c}\delta_*'\\\delta_*''\end{array}}\Big]\Big(\int_z^y\omega\Big)}{\sqrt{\omega_{\delta_*}(z)}\sqrt{\omega_{\delta_*}(y)}},
\ee
is called the Prime form, see e.g., \cite[vol.\ II, sec.\ 3b, \S 1]{M}. Since \eqref{pform} depends on the choice of the path of integration, $E(z,y)$ is well-defined only on the universal cover of $\Sigma$. However the zeros of the Prime form is well-defined on $\Sigma$ and $E(z,y)$ has only one first order zero at $z=y$. It also transforms as a holomorphic $(-\frac12)$-differential in each variable. 
Other useful properties of the Prime form include antisymmetry
$$E(z,y)=-E(y,z),$$ 
which follows from \eq{oddtheta} since the theta characteristic $\delta$ is odd. 
The near-diagonal behavior is given by
\be\label{neard}
E(z,y)=\frac{z-y}{\sqrt{dz}\sqrt{dy}}(1+\mathcal O((z-y)^2))).
\ee
The Prime form is also independent of the choice of the odd non-singular characteristic. 
This follows from the fact, that the periodicity properties of the Prime form along $a,b$-cycles are independent of the choice of $\delta_*$. Indeed, it is invariant under continuation around the $a$-cycles and around $b_j$-cycle it transforms as
\begin{align}\label{transE}
&E(z+a_j,y)=\left({\textstyle\frac{dz'}{dz}|_{a_j}}\right)^{-\frac12}E(z,y),\\\label{transE1}
&E(z+b_j,y)=\pm\left({\textstyle\frac{dz'}{dz}|_{b_j}}\right)^{-\frac12}e^{-\pi i\tau_{jj}-2\pi i\int_y^z\omega_j}E(z,y),
\end{align}
and similarly for $y$. Here ${\textstyle\frac{dz'}{dz}|_{a_j}}$, ${\textstyle\frac{dz'}{dz}|_{b_j}}$ are the automorphy factors for the Abelian differentials, i.e., for the canonical line bundle $K$.

Finally, the Prime form can be used to construct all meromorphic functions on $\Sigma$. If $(f)=\sum n_a\cdot P_a$ is a degree zero divisor of a meromorphic function $f(z)$, then the function has the form 
\be\nonumber
f(z)=C\cdot e^{2\pi i\int_{}^z\omega}\cdot \prod_a E(z,P_a)^{n_a},
\ee
up to multiplication by a constant $C$ and by a trivial automorphy factor, cancelling out the $b$-cycle transformations in \eqref{transE1}.
Using the behavior under the continuation along the $b$-cycles \eq{transE} one can check that $f$ is indeed a globally defined function, provided the image of the divisor under the Abel map is zero $I[(f)]=0$, in agreement with the Abel theorem.

\section{Lowest Landau level, integer QH state and the Laughlin state}

\subsection{Holomorphic line bundles on Riemann surfaces}

The lowest Landau level (LLL) on a Riemann surface is defined as the space of global solutions to the $\bp$-equation
\begin{equation}
\label{dbareq}
\bp_{\rm L} s(z)=0.
\end{equation}
These are the holomorphic sections of the line bundle $\rm L$. Here $\bp_{\rm L}$ is the Dolbeault operator, which 
acts from $\mathcal C^\infty$ sections of $\rm L$ to (0,1) forms with coefficients in $\mathcal C^\infty$ sections,
\begin{equation}\nonumber
\bp_{\rm L}: \mathcal C^\infty(\Sigma,\rm L)
\to\Omega^{0,1}(\Sigma,\rm L).
\end{equation}
In the context of quantum Hall states, see e.g., \cite{K16}, we choose the line bundle as a tensor product 
\be\label{rmL}
\L=L_\NPhi\otimes K^s\otimes L_{\varphi}.
\ee
Here $L_\NPhi$ is a positive line bundle of degree $\NPhi$ (here $\NPhi$ in general denotes the degree and not a tensor power), $K^s$ is the power of the canonical line bundle, $s\in\frac12\mathbb Z$, and $L_{\varphi}$ is a flat line bundle.
The total degree of the line bundle $\L$ is 
\be\label{degL}
\deg\L=\NPhi+2s(\g-1).
\ee
For the half-integer spin $s$, $K^s\simeq K^{[s]}\otimes K^{1/2}$ and we shall specify the choice of the square root of the canonical bundle. In this case instead of $K^s$ we shall write
$$
S_\delta^{2s}\simeq K^{[s]}\otimes S_\delta,
$$
where $[s]$ is the integer part of $s$ and $S_\delta$ is the spin bundle, labelled by the spin structure $\delta$, see \eq{spineps}. Then the holomorphic sections \eq{dbareq} correspond to the LLL states on $\Sigma$ for the magnetic field with total flux $\NPhi$, with the AB-phases encoded by $L_{\varphi}$ and the gravitational spin $s$. The latter also means that the sections of the line bundle \eq{rmL} transform as an integer or half-integer tensor power of a holomorphic one form $(dz)^s$.

In the context of the higher genus Riemann surfaces it is convenient to describe the  
holomorphic sections of line bundles in terms of their divisors. Here we recall the basic construction and refer to \cite{Bost,Fay92} for general review and explicit examples.

Consider an arbitrary positive holomorphic line bundle $L_d$ of degree $d$. Suppose we have constructed a holomorphic section $s(z)$ of $L_d$ with a positive divisor $D$ of zeroes and no poles, 
\begin{equation}\nonumber
D=\sum_{a=1}^d q_a
\end{equation}
where we allow for coincident points. The corresponding holomorphic line bundle is also denoted as $L_d\simeq \mathcal O(D)$. The Riemann-Roch theorem \cite[III.4.8]{FKra} computes the dimension of the vector space 
\be\label{HO}
H^0(\Sigma,\mathcal O(D)):=\{\,f\;|\;(f)\geqslant - D\}
\ee
of meromorphic functions $f$ with divisors of poles contained in $D$. We have
\be\label{RRoch}
\dim H^0(\Sigma,\mathcal O(D))=\deg D-\g+1+\dim H^0(\Sigma,\Omega(-D)),
\ee
where 
\be\nonumber
H^0(\Sigma,\Omega(-D))\}:=\{\,\eta\;|\;(\eta)\geqslant D\},
\ee
is the vector space of holomorphic differentials $\eta$ with divisor of zeroes that includes $D$. Dimension of this vector space is also called the index of speciality of $D$, denoted as
\be\label{speciality}
i(D):=\dim H^0(\Sigma,\Omega(-D)).
\ee
We will consider the case of $\deg D$ large enough, so that no such differentials exist and the index of speciality is identically zero. The lower bound on $\deg D$ for this to be the case 
is $\deg D>2\g-2$, as follows immediately from the definition of $H^0(\Sigma,\Omega(-D))\}$ taking into account that $\deg (\eta)=2\g-2$. Therefore in the case of the line bundle \eq{rmL} the correction term in \eq{RRoch} vanishes for
\be\label{bound}
\NPhi>(1-s)(2\g-2).
\ee
Now, taking any meromorphic function $f(z)$ from $H^0(\Sigma,\mathcal O(D))$, the product $f(z)\cdot s(z)$ is again a holomorphic section of $\mathcal O(D)$ since the poles of $f(z)$ cancel out by the zeroes of $s(z)$. Hence the vector space $H^0(\Sigma,L_d)$ of the holomorphic sections of $L_d\simeq \mathcal O(D)$ has the dimension 
\be\nonumber
\dim H^0(\Sigma,L_d)=d-\g+1,
\ee
where we set the last term in \eq{RRoch} to zero. 

Now we apply this logic to the line bundle $O(D_\m)\otimes S^{2s}_\delta$. We denote the magnetic field divisor corresponding to the line bundle $L_\NPhi\simeq\mathcal O(D_\m)$ as
\begin{equation}\label{degDm}
D_\m=\sum_{a=1}^\NPhi q_a,
\end{equation}
and 
\be\label{degDd}
D_\delta=\sum_{\alpha=1}^{\g-1} p_\alpha,\quad \deg D_\delta=\g-1
\ee
is the divisor of the spin bundle $S_\delta$, labelled by the spin structure $\delta$, see  \eq{spineps}. For simplicity we restrict our discussion here to the case of an arbitrary odd (and not necessarily non-singular) theta characteristic $\delta$, where the divisor $D_\delta$ is positive, \eq{Ddelta}.

Now we would like to construct the holomorphic sections of the line bundle $\L$ explicitly.

\subsection{Lowest Landau level}

As follows from the discussion above, in order to describe the full lowest Landau level $H^0(\Sigma,\L)$ we can construct a single holomorphic section of $\L$ and then apply Riemann-Roch theorem to compute the degeneracy $\dim H^0(\Sigma,\L)$. 
Consider first the magnetic line bundle $\mathcal O(D_\m)$.
Using the Prime form, \eq{pform}, we can immediately construct the following object
\begin{equation}\label{prodE}
\prod_{a=1}^\NPhi E(z,q_a),
\end{equation}
with the divisor of zeroes $D_\m$. However, this object transforms as $(-\NPhi/2)$-differential in $z$ and not as a scalar. This can be remedied by introducing the following $\frac\g2$-differential with no zeroes and poles,
\be\label{sigmafunc}
\sigma(z)=\exp-\sum_{j=1}^\g\int_{a_j}\omega_j(y)\log E(y,z).
\ee
Following \cite[Prop.\ 1.2]{Fay92}, $\sigma(z)$ is characterized by the automorphy factors
\be\label{autosig}
\left({\textstyle\frac{dz'}{dz}|_{a_j}}\right)^{\frac\g2},\quad \left({\textstyle\frac{dz'}{dz}|_{b_j}}\right)^{\frac\g2}\cdot e^{\pi i(\g-1)\tau_{jj}+2\pi i\Delta_j^z},
\ee
under the continuation along the $a$ and $b$-cycles, where 
${\textstyle\frac{dz'}{dz}|_{a_j}}$, ${\textstyle\frac{dz'}{dz}|_{b_j}}$ are the automorphy factors \eq{transE} and $\Delta^z$ is the vector of Riemann constants defined with respect to the point $z$, as in \eq{riemannCz}.

\begin{rem}
We shall note that $\sigma(z)$ is properly defined only on the universal cover of $\Sigma$ or on the 2-cell for the canonical dissection of the surface, introduced later in Fig.\ \ref{fig:dissect}, and does not lift to the whole Riemann surface as a holomorphic section of any line bundle. We also note, that instead of using the definition above for $\sigma(z)$ we can equivalently use the $-\g(\g-1)/2$-differential $c(z)$ defined in Ref. \cite[Eq.\ 1.17]{Fay92}, where an explicit formula is given in terms of holomorphic differentials.
\end{rem}

Multiplying \eqref{prodE} by the power of $\g$-th root of $\sigma$ we can construct the scalar
$$
\sigma(z)^{\NPhi/\g}\cdot\prod_{a=1}^\NPhi E(z,q_a)
$$
which is now a holomorphic section of $\mathcal O(D_\m)$ on $\Sigma$, considered as a function of $z$. 
Next we take the flat line bundle in \eq{rmL} into consideration and construct the sections of $\mathcal O(D_\m)\otimes L_\varphi$. The idea here is to parameterize all line bundles of degree $\NPhi$.    
The space of equivalence classes of line bundles $L_d$ of fixed degree $d$ is called Picard variety
\begin{equation}\nonumber
{\rm Pic}_d(\Sigma)=\{{\rm line\; bundles}\; L\; {\rm on}\; \Sigma\; {\rm with}\; c_1(L)=d\}.
\end{equation}
Picard varieties for different $d$ are very similar and can be identified with ${\rm Pic}_0$, which in turn can be identified with the Jacobian $Jac(\Sigma)$, in the following fashion. For instance we can start with a choice of a divisor \eq{degDm}. Then for any other inequivalent divisor $D'_\m=\sum_{a=1}^\NPhi q'_a$ in ${\rm Pic}_d(\Sigma)$ the holomorphic sections of $\mathcal O(D_\m)$ and $\mathcal O(D'_\m)$ can be identified by multiplication by a meromorphic section 
\be\label{merosec}
\prod_{a=1}^\NPhi\frac{E(z,q'_a)}{E(z,q_a)},
\ee
of the degree zero line bundle with the divisor $D'_\m-D_\m$.
Using the Abel map we can parameterise all inequivalent divisors of the same degree
by a complex vector
\be\label{varphi}
I[D'_\m-D_\m]=\varphi \in Jac(\Sigma),
\ee
taking values in the Jacobian of the Riemann surface. If $D_\m\equiv D'_\m$ then \eq{merosec} defines a meromorphic function and in this case $\varphi=0 \mod\Lambda$ by the Abel theorem.
In other words we can represent $\mathcal O(D'_\m)$ as $\mathcal O(D_\m)\otimes L_\varphi$ using the relative complex coordinate $\varphi$ in $Jac(\Sigma)$.
Reversing the logic, we can describe the line bundle $\L$ \eq{rmL} by the divisor 
\be\label{divL}
{\rm div}\,\L=D_\m+2sD_\delta+D'_\m-D_m,
\ee
where the image of $D'_\m-D_m$ in the Jacobian is given by $\varphi$, \eq{varphi}.

\begin{rem}
There is no canonical way to fix the coordinates on ${\rm Pic}_d$ universally for all $d$. A particularly convenient choice is to fix one marked point $P\in \Sigma$ and consider the divisor
\begin{equation}\nonumber
D^0_\m=\NPhi\cdot P,
\end{equation}
concentrated at $P$ as a reference point in ${\rm Pic}_d$. Then the corresponding line bundle $\mathcal O(D^0_\m)\simeq L^\NPhi$ is actually the $\NPhi$th tensor power of degree one line bundle $\mathcal O(P)$. We can use this divisor as a base point for the magnetic line bundles, so that $I[D'-D^0_\m]=I[D'-D]+I[D-D^0_\m]=\varphi+\varphi_0$. Although we do not pursue this further, this choice is useful, e.g., for the consistent definition of large $\NPhi$ asymptotics. 
\end{rem}

Putting everything together we can immediately write down the following holomorphic section $s(z)$,
\begin{equation}\label{LLLs}
s(z)=\sigma(z)^{2s+\NPhi/\g}\cdot\prod_{a=1}^\NPhi E(z,q_a)\cdot\prod_{\alpha=1}^{\g-1}\left(E(z,p_\alpha)\right)^{2s}\cdot\prod_{a=1}^\NPhi\frac{E(z,q'_a)}{E(z,q_a)}
\end{equation}
of $\L$, \eq{rmL}, with the positive divisor $(s)= D_\m+2sD_\delta+D'_\m-D_\m$.

By the Riemann-Roch theorem \eqref{RRoch} the dimension of the vector space of holomorphic sections, i.e., the degeneracy of the LLL states is
\be\label{NNPhi}
N=\dim H^0(\Sigma,\L)=\NPhi+(2s-1)(\g-1),
\ee
provided the lower bound \eq{bound} on $\NPhi$ holds. In what follows it will be enough for our purposes to construct just one section and we will not need to construct the full basis explicitly.

An equivalent way to describe the lowest Landau level is via the automorphy factors, i.e., the multipliers $\chi_a$ and $\chi_b$ appearing under the transformations of the holomorphic sections along $a$ and $b$ cycles. Indeed, since the ratio of two holomorphic sections of the same bundle is a meromorphic function on $\Sigma$ all sections in $H^0(\Sigma,\L)$ have the same automorphy factors.
We summarize this observation in the following
\begin{prop}\label{auto}
The line bundle $\L=\mathcal O(D_\m)\otimes S_\delta^{2s}\otimes L_\varphi$ is described by the following automorphy factors
\be\label{auto1}
\chi_{a_j}=\left({\textstyle\frac{dz'}{dz}|_{a_j}}\right)^s,\quad \chi_{b_j}=\left({\textstyle\frac{dz'}{dz}|_{b_j}}\right)^s\cdot e^{-\pi i\frac\NPhi\g\tau_{jj}+2\pi i(2s\delta_j+\varphi_j)+2\pi iI_j\,[D_\m-\NPhi z]+2\pi i\frac\NPhi\g\Delta_j^z}.
\ee 
Here ${\textstyle\frac{dz'}{dz}|_{a_j}}$, ${\textstyle\frac{dz'}{dz}|_{b_j}}$ are the automorphy factors \eq{transE} for the canonical line bundle $K$ and $I_j[\,\cdot\,]$ denotes the $j$th component of the Abel map \eq{abel}.
\end{prop}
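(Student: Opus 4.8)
The key reduction is the remark recorded just before the statement: if $s_1,s_2\in H^0(\Sigma,\L)$ are two holomorphic sections of the same line bundle, then $s_1/s_2$ is a global meromorphic function on $\Sigma$, hence single-valued, so $s_1$ and $s_2$ carry identical automorphy factors. Since \eqref{LLLs} already provides an explicit nonzero section $s(z)$ of $\L$, it suffices to continue $s(z)$ around the $a_j$- and $b_j$-cycles and read off the multipliers factor by factor, using the transformation law \eqref{autosig} for $\sigma(z)$ together with \eqref{transE}--\eqref{transE1} for the prime form. (One could instead build automorphy factors directly from the defining data of $\L$, but the explicit section is the route already set up in the text.)

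Along $a_j$ only powers of $\bigl(\tfrac{dz'}{dz}|_{a_j}\bigr)$ appear. The factor $\sigma(z)^{2s+\NPhi/\g}$ contributes the exponent $\tfrac\g2\bigl(2s+\tfrac\NPhi\g\bigr)=s\g+\tfrac\NPhi2$; the $\NPhi$ prime forms $E(z,q_a)$ contribute $-\tfrac\NPhi2$; the $(\g-1)$ prime forms in $\prod_\alpha E(z,p_\alpha)^{2s}$ contribute $-s(\g-1)$; and the powers cancel in the ratio $\prod_a E(z,q'_a)/E(z,q_a)$. The total is $s\g+\tfrac\NPhi2-\tfrac\NPhi2-s(\g-1)=s$, with no exponential factor, giving $\chi_{a_j}=\bigl(\tfrac{dz'}{dz}|_{a_j}\bigr)^s$ as claimed.

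Along $b_j$ the same count again yields exponent $s$ for $\bigl(\tfrac{dz'}{dz}|_{b_j}\bigr)$, and it remains to assemble the exponential multiplier. Two Abel-map identities are needed: $\sum_{a=1}^\NPhi\int_{q_a}^z\omega_j=I_j[\NPhi z-D_\m]$, which is mere additivity of the integral, and $\sum_{\alpha=1}^{\g-1}\int_{p_\alpha}^z\omega_j=\Delta_j^z-\delta_j$, which follows by combining $I[D_\delta]=\delta-\Delta$ from \eqref{del} with the shift formula \eqref{riemannCz}; the ratio $\prod_a E(z,q'_a)/E(z,q_a)$ contributes the constant $e^{2\pi i\varphi_j}$ by the definition \eqref{varphi}. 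Collecting the coefficient of $\pi i\tau_{jj}$ over the four groups of factors gives $\bigl(2s+\tfrac\NPhi\g\bigr)(\g-1)-\NPhi-2s(\g-1)=-\tfrac\NPhi\g$; the coefficient of $2\pi i\Delta_j^z$ is $\bigl(2s+\tfrac\NPhi\g\bigr)-2s=\tfrac\NPhi\g$; the coefficient of $2\pi i\delta_j$ is $2s$; and the term $2\pi i I_j[D_\m-\NPhi z]$ survives from $\prod_a E(z,q_a)$. Assembling these reproduces exactly the multiplier $\chi_{b_j}$ of \eqref{auto1}.

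The one genuinely delicate point, and the place where care is required, is the bookkeeping of branches: $\sigma(z)^{\NPhi/\g}$ is defined only after a choice of $\g$-th root, the prime forms enter $\prod_\alpha E(z,p_\alpha)^{2s}$ through an integer power $2s$, and the continuation \eqref{transE1} carries an overall sign. One has to verify that the constant roots of unity and signs produced by the individual factors multiply to $1$ inside $s(z)$. This follows because $s(z)$ is by construction an honest holomorphic section of the line bundle $\L$ of the isomorphism type \eqref{rmL}, so the $z$-independent ambiguity is nothing but the freedom in the choice of local trivialisations; comparing with the $z$-dependent part computed above then pins the constant part to be exactly that of \eqref{auto1}. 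Equivalently, the signs are forced to be consistent with the square-root conventions that define $S_\delta$ in \S\ref{sstr} and with the reference divisor chosen for $L_\NPhi$. No new idea beyond careful tracking of these conventions is involved.
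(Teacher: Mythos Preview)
Your proof is correct and follows exactly the approach indicated in the paper: compute the automorphy factors of the explicit section \eqref{LLLs} using the transformation laws \eqref{transE}--\eqref{transE1} for the prime form and \eqref{autosig} for $\sigma(z)$, together with the relation \eqref{del}. The paper records this as a one-line proof; you have simply carried out the bookkeeping in full, including the reduction $\sum_\alpha\int_{p_\alpha}^z\omega_j=\Delta_j^z-\delta_j$ via \eqref{del} and \eqref{riemannCz}, which is precisely the ``taking into account the relation \eqref{del}'' step the paper alludes to.
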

\begin{proof}
Follows from the transformation formulas for the Prime form, \eq{transE}, and for $\sigma(z)$, \eq{autosig}, under continuation along the $a$, $b$-cycles, taking into account the relation \eqref{del}.
\end{proof}

\subsection{Integer QH state}
\label{intQH}

Suppose $s_n(z),\;n=1,...,N$, where $N=\dim H^0(\Sigma,\L)$, is a full basis of the states on the lowest Landau level. 
\begin{maindefn}\label{IQHS}
The integer QH state is a section of $\det \oplus_n\pi^*_n(\L)$ on $\Sigma^N$, where $\pi_n$ is the restriction to the $n$th factor in $\Sigma^N$. Explicitly, 
\be\nonumber
\det s_n(z_m)|_{n,m=1}^N.
\ee
In other words this is a completely anti-symmetric combination of one-particle states, i.e., the Slater determinant. 
\end{maindefn}
Although we have constructed only one holomorphic section explicitly \eq{LLLs}, the integer QH state nevertheless can be constructed explicitly in terms of theta functions and Prime forms. The following is a version of the Fay's identity \cite[Prop.\ 2.16]{Fay73}
\begin{prop}
Let $s_n(z),\;n=1,...,N$, $N=\dim H^0(\Sigma,\L)$, be a basis of holomorphic sections of the line bundle $\L=\mathcal O(D_\m)\otimes S^{2s}_\delta\otimes L_\varphi$, \eq{rmL}, such that its index of speciality \eq{speciality} vanishes $i({\rm div}\,\L)=0$. Then for $z_1,...,z_N\in \Sigma^N$ the following formula holds
\be\label{boso}
\det s_n(z_m)|_{n,m=1}^N=C(\Sigma)\cdot\vartheta\Big({\textstyle \sum_{n=1}^N} z_n-\Delta-{\rm div}\,\L,\tau\Big)\cdot\prod_{n<m}^NE(z_n,z_m)\cdot\prod_{n=1}^N\sigma(z_n)^{\frac1\g(N+2s-1)},
\ee
where $\Delta$ is the vector of Riemann constants \eq{riemannC} and $C(\Sigma)$ is a $z$-independent constant, which can depend on the surface $\Sigma$.
\end{prop}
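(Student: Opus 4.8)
The plan is to prove the identity \eqref{boso} by comparing both sides as sections of the same line bundle over $\Sigma^N$ and then pinning down the $z$-independent constant. First I would fix, once and for all, the basis $s_n(z)$ of $H^0(\Sigma,\L)$ by taking the explicit section $s(z)$ of \eqref{LLLs} together with multiplication by a basis of meromorphic functions in $H^0(\Sigma,\mathcal O({\rm div}\,\L))$, so that the left-hand side $\det s_n(z_m)$ is an honest holomorphic section of $\bigoplus_n\pi_n^*(\L)$ wedged up to the determinant line bundle; since the ratio of any two bases differs by a constant determinant, it suffices to prove the identity for one choice.

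\medskip

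Next I would establish that the right-hand side of \eqref{boso} transforms, in each variable $z_m$, with exactly the automorphy factors \eqref{auto1} of $\L$ given in Proposition \ref{auto}. This is the heart of the computation: using the quasi-periodicity \eqref{qp1} of the theta function under $z_m\mapsto z_m+a_j$ and $z_m\mapsto z_m+b_j$ (recall $I[z_m+b_j]=I[z_m]+\tau_j$), the transformation laws \eqref{transE}--\eqref{transE1} of the prime form, and the automorphy factors \eqref{autosig} of $\sigma$, one checks term by term that the $b_j$-cycle multipliers coming from $\vartheta\bigl(\sum z_n-\Delta-{\rm div}\,\L\bigr)$, from $\prod_{n<m}E(z_n,z_m)$, and from $\prod_n\sigma(z_n)^{(N+2s-1)/\g}$ combine — with the crucial cancellation of the $z$-dependent exponents against $2\pi i I_j[D_\m-\NPhi z]$ and $2\pi i\frac{\NPhi}{\g}\Delta_j^z$ — to reproduce $\chi_{b_j}$; the $a_j$-cycle check is the easy part since $\vartheta$ and $E$ are $a$-periodic. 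Antisymmetry of the right-hand side under $z_m\leftrightarrow z_{m'}$ follows from the antisymmetry $E(z,y)=-E(y,z)$ of the prime form (each swap flips exactly one factor $E(z_m,z_{m'})$), matching the left-hand side. One also notes that the divisor class condition is consistent: ${\rm div}\,\L$ has degree $\NPhi+2s(\g-1)=\deg\L$ and $N=\deg\L+1-\g$ by \eqref{NNPhi}, so $\sum z_n-\Delta-{\rm div}\,\L$ is the right argument for the Riemann vanishing theorem, Proposition \ref{RV}.

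\medskip

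Having matched automorphy factors, the ratio of the two sides is a meromorphic function on $\Sigma^N$ that is symmetric/antisymmetric consistently, so it descends to a meromorphic function on the symmetric product $\Sigma^{(N)}$. I would then argue it has no poles and no zeroes: the only candidate zero locus of the right-hand side beyond the diagonals $z_n=z_m$ (already accounted for by the $\prod E$ factor, which vanishes simply there exactly as the Slater determinant does when two columns coincide) is the vanishing of the theta factor, which by Riemann vanishing, Proposition \ref{van}, together with the hypothesis $i({\rm div}\,\L)=0$, occurs precisely where $\det s_n(z_m)=0$ — i.e. where the sections $s_n$ fail to be linearly independent at the configuration $\{z_m\}$. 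Hence the ratio is a nowhere-zero holomorphic function on a compact connected space (using $N\geqslant 1$ and that $\Sigma^{(N)}$ is connected), therefore a constant $C(\Sigma)$, which may depend on $\Sigma$ through the choices of $D_\m$, $\delta$, $\varphi$ and the base point.

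\medskip

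The main obstacle I anticipate is the bookkeeping in the second step — verifying that the $z$-dependent pieces of the $b_j$-cycle multipliers cancel exactly, i.e. that the exponential $e^{2\pi i I_j[D_\m-\NPhi z]+2\pi i\frac{\NPhi}{\g}\Delta_j^z}$ in \eqref{auto1} is reproduced by the interplay of the theta quasi-periodicity in the shifted argument $\sum z_n-\Delta-{\rm div}\,\L$, the prime-form factors $e^{-\pi i\tau_{jj}-2\pi i\int\omega_j}$ from \eqref{transE1}, and the $\sigma$-automorphy factor $e^{\pi i(\g-1)\tau_{jj}+2\pi i\Delta_j^z}$ raised to the power $(N+2s-1)/\g$; the exponents of $\sigma$ and of $\vartheta$ must be balanced using $N=\NPhi+(2s-1)(\g-1)$ so that the total power of $\sigma$, which is $\frac{N+2s-1}{\g}\cdot$(number of $E$ factors touching $z_m$), works out to $2s+\NPhi/\g$ per variable as in \eqref{LLLs}. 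Once this arithmetic closes, the rest is formal. Alternatively, and more efficiently, one can invoke \cite[Prop.\ 2.16]{Fay73} directly after identifying ${\rm div}\,\L$ with the relevant nonspecial divisor of degree $N+\g-1$, reducing the proof to checking that our normalization of $\sigma$ and of the prime form matches Fay's; I would present the direct verification as the primary argument and remark on the reduction to Fay's identity as a cross-check.
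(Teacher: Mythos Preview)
Your overall strategy coincides with the paper's: verify that both sides transform with the automorphy factors of $\L$ in each variable, observe antisymmetry, and then argue that the ratio is a global holomorphic function with no zeroes or poles on $\Sigma^N$, hence constant. The paper in fact treats the automorphy check (your ``main obstacle'') as a one-line remark at the end; the substantive work is in your third paragraph, and there you have left a genuine gap.

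What is missing is the argument that the residual zero divisors in (say) $z_1$ actually \emph{coincide}, not merely lie in the same linear equivalence class. Fix generic $z_2,\dots,z_N$. The left-hand side, as a section of $\L$ in $z_1$, vanishes at $z_2,\dots,z_N$ and at $\g$ further points $P_1,\dots,P_\g$; the right-hand side vanishes at the same diagonal points and, by the Riemann vanishing theorem (Prop.~\ref{RV}, not Prop.~\ref{van} as you cite), at $\g$ points $\tilde P_1,\dots,\tilde P_\g$. Both residual divisors are linearly equivalent to ${\rm div}\,\L-\sum_{m\ge 2}z_m$, so $\sum_j P_j\equiv\sum_j\tilde P_j$. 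But if these degree-$\g$ divisors differed by a non-constant meromorphic function, the ratio of the two sides would acquire poles and your argument collapses. The paper closes this by observing that a non-constant $f$ with $(f)\ge -\sum_j\tilde P_j$ would force $\dim H^0(\Sigma,\mathcal O(\sum_j\tilde P_j))>1$, hence $i(\sum_j\tilde P_j)>0$ by Riemann--Roch; but a special degree-$\g$ divisor in the theta argument makes the theta function vanish identically in $z_1$ (see \cite[VI.3.3]{FKra}), which was already excluded. This speciality step is the crux, and your sentence ``occurs precisely where $\det s_n(z_m)=0$'' asserts it without proof. The hypothesis $i({\rm div}\,\L)=0$ is used earlier and separately, to rule out identical vanishing of the theta factor (otherwise $\dim H^0(\Sigma,\L)>N$); it does not by itself give you the equality $\sum P_j=\sum\tilde P_j$.
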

\begin{rem}
We shorten the notation and drop $I[\,\cdot\,]$, where the context implies the Abel map, as e.g. in the notations for the theta functions. In particular, taking into account 
\eq{divL} and \eq{varphi}, it follows that
$$
I[\,{\rm div}\,\L\,]=I[D_\m+2sD_\delta]+\varphi
$$
in the argument of the theta function above, \eq{boso}.
\end{rem}
\begin{proof}
Since both sides are manifestly holomorphic with no poles and also completely anti-symmetric in $z_n$'s it is enough to compare the zeroes in $z_1$. The lhs is a section of $\L$ with the divisor
\be\nonumber
D_{\rm lhs}=\sum_{m=2}^Nz_m+P_1+...+P_\g,
\ee
for certain points $P_j$ and by definition $D_{\rm lhs}\equiv 2sD_\delta+D_\m+\varphi$. (Here with some abuse of notations, $D_\m+\varphi$ stands for $D'_\m$, see \eq{divL}, \eq{varphi}) By description of the theta divisor in Prop.\ \ref{thetadivisor} the theta function on the rhs would vanish identically iff\; $2sD_\delta+D_\m+\varphi-\sum_{n=1}^Nz_n$ is linearly equivalent to a positive divisor in $J_{\g-1}$. However, this cannot happen since it would imply $\dim\L>N$.
Hence by the Riemann vanishing theorem, Prop.\ \ref{RV}, the theta function on the rhs has exactly $\g$ zeroes at the points $\tilde P_j$. Hence the divisor in $z_1$ of the rhs is
\be\nonumber
D_{\rm rhs}=\sum_{m=2}^Nz_m+\tilde P_1+...+\tilde P_\g
\ee
and moreover by \eq{riemvan} we have
$$
\sum_{j=1}^\g\tilde P_j\equiv-\sum_{m=2}^Nz_m+2sD_\delta+D_\m+\varphi
$$
and it follows that $D_{\rm rhs}\equiv 2sD_\delta+D_\m+\varphi\equiv D_{\rm lhs}$. Now we need to show that the divisors $D_{\rm lhs}$ and $D_{\rm rhs}$ actually coincide, i.e. that the following degree $\g$ divisors coincide
$$
P_1+...+P_\g=\tilde P_1+...+\tilde P_\g,
$$
and what we have established so far is only their linear equivalence $\sum P\equiv\sum \tilde P$. Suppose there exists a non-constant meromorphic function $f$, such that 
\be\label{PP}
P_1+...+P_\g=\tilde P_1+...+\tilde P_\g+(f).
\ee
Then by \eq{RRoch} it follows that $\dim H^0(\Sigma,\mathcal O(\tilde P_1+...+\tilde P_\g))=\g-\g+1+i(\tilde P_1+...+\tilde P_\g)>1$ and thus the index of speciality \eqref{speciality}, $i(\tilde P_1+...+\tilde P_\g)>0$. By \cite[VI.3.3., Thm.\ b]{FKra} this would imply that the theta function on the rhs vanishes identically and we arrive at a contradiction. Hence $f=const$ and we have established the equality \eq{PP}.

Finally, since all $s_n(z)$ on the lhs have the same automorphy factors, described in Prop.\ \ref{auto}, one can immediately check that the automorphy factors on both sides of \eq{boso} coincide.

\end{proof}

\subsection{Hermitian metric and bosonization formula}

Hermitian metric $h(z,\bz)$ defines the length of a section $s(z)$ at a point $z\in\Sigma$  so that $\Vert s(z)\Vert_h^2$ is a scalar function on $\Sigma$. Since our line bundle is a tensor product, the hermitian metric is also a product
\be\label{norm1}
\Vert s(z)\Vert^2_h:=|s(z)|^2\cdot h^\NPhi(z,\bz)\big(g_{z\bz}(z,\bz)\big)^{-s},
\ee
Here $g_{z\bz}$ is the Riemannian metric on $\Sigma$ in conformal complex coordinates $ds^2=2g_{z\bz}d^2z$, and $(g_{z\bz})^{-s}$ transforms as $(-s,-s)$-differential, and thus can serve as a hermitian metric on $K^s$. Hermitian metric $h^\NPhi(z,\bz)$ on $\mathcal O(D_\m)\otimes L_\varphi$ defines the magnetic field, since the curvature $(1,1)$ form is equal to the magnetic field strength 
\begin{equation}\label{curvF}
F=-\p_z\p_{\bz}\log \bigl(h^{\NPhi}(z,\bz)\bigr)\,idz\wedge d\bz, \quad c_1(F):=\frac1{2\pi}\int_\Sigma F=\NPhi
\end{equation}
The magnetic field density is then defined as 
\be\nonumber
B=g^{z\bz}F_{z\bz},
\ee
and the quantization condition in \eq{curvF} reads
\be\label{quant}
\frac1{2\pi}\int_\Sigma B\sqrt gd^2z=\NPhi\in\mathbb Z_+.
\ee
In principle the magnetic field and the metric can be chosen in an arbitrary fashion, as long as all the necessary constraints are satisfied. However, there is convenient choice of the hermitian and Riemannian metrics where the bosonization formula has a particularly simple form. 
The choice of $F$ is the following
\be\label{Fzz}
F_{z\bz}=\NPhi\, {g_{\rm c}}_{z\bz}
\ee
where $g_c$ is the canonical metric on the surface
\be\label{canmet}
{g_{\rm c}}_{z\bz}\,idz\wedge d\bz=\frac{\pi i}{\g}\sum_{j,l=1}^\g\omega_j\wedge{(\Im \tau)^{-1}}_{jl}\bar\omega_l,
\ee
normalized to have the area $2\pi$.
The canonical metric is just the pullback of the flat metric on $Jac(\Sigma)$ to $\Sigma$ under the Abel map. The hermitian metric corresponding to \eq{Fzz} 
reads
\begin{multline}\label{here}
h^\NPhi(z,\bz)=\exp\left(-\frac{2\pi\NPhi}{\g(\g-1)^2}\Im \Delta^z\,(\Im\tau)^{-1}\Im \Delta^z\right.\\\left.+\frac{4\pi}{(\g-1)^2}\Im \Delta^z\,(\Im\tau)^{-1}\Im\big((\g-1)\,{\rm div}\, \L+\deg\L\cdot\Delta\big)\right).
\end{multline}
The divisor ${\rm div}\, \L$ and its degree are defined in \eq{divL} and \eq{degL}. Note that the last term in the parentheses in the second line is independent of $z$ as well as $P_0$.

This particular form of the hermitian metric is fixed by the conditions that, (1) \eq{Fzz} holds, (2) the norm of a section of $\L$, \eq{norm1}, is a scalar function, taking into account \eq{auto1} and (3) it is independent of the choice of the base point $P_0$.  

The bosonization formula establishes an equality for the hermitian norms of the objects in \eq{boso}, see Refs.\ \cite{Fal,ABMNV,Kn1986,VV},
\begin{align}\nonumber\label{bos1}
\frac{\det'\bp^+_\L\bp_\L^{\phantom{a}}}{\det\langle s_n,s_m\rangle_{L^2}}\,&\Vert\det s_n(z_m)\Vert_h^2\\&=
C\cdot\left(\frac{\det'\Delta_{g_a}}{A_{g_a}\det\Im\tau}\right)^{-\frac12}\cdot\Vert\vartheta\big({\textstyle\sum_{n=1}^Nz_n}-\Delta-{\rm div}\,\L,\tau\big)\Vert^2\cdot\prod_{n<m}^Ne^{-G_{g_a}^A(z_n,z_m)},
\end{align}
where $C$ is now a numerical constant, computed in Ref.\ \cite{Went}. In the numerator on the left hand side here stands the zeta determinant of the magnetic laplacian $\bp^+_\L\bp_\L^{\phantom{a}}$ for the line bundle $\L$, see e.g. \cite{KMMW} for the definition, and the point-wise hermitian norm is taken in the metric \eq{here}. In the denominator is the Gram matrix of $L^2$ inner products of sections. The objects on the rhs, such as the zeta determinant of the scalar laplacian $\det'\Delta_{g}$, the area $A_g$ and the Green function $G^A$, are computed in the Arakelov metric $g_a$ on $\Sigma$, which will be defined in \S \ref{arak}. The norm of the theta function is defined according to
\be\nonumber
\Vert\vartheta(e,\tau)\Vert^2=e^{-2\pi\Im e\,(\Im\tau)^{-1}\Im e}|\vartheta(e,\tau)|^2,
\ee
for any $e\in\mathbb C^\g$.
\subsection{Laughlin states}
Here we define and give an explicit form of the Laughlin state \cite{L} on a compact Riemann surface $\Sigma$ for the filling fraction
\be\nonumber
\nu=\frac1\beta
\ee
where $\beta=1,2,3,...$ is a positive integer. We again consider the line bundle  $\L$, \eq{rmL} and impose the condition that $\deg\L$ is divisible by $\beta$. Without loosing too much generality we can assume $\NPhi$ and $2s$ are both divisible by $\beta$,
\be\nonumber
\deg \L=\beta \left(\frac1\beta\NPhi+\frac{2s}\beta(\g-1)\right).
\ee
Note that we do not need the condition that the divisor $D_\m$ itself is a multiple of $\beta$. 

Now we consider the product $\Sigma^N$ with $N$ given by  
\be\label{NNp}
N=\frac1\beta\deg\L+1-\g=\frac1\beta\NPhi+\left(\frac{2s}\beta-1\right)(\g-1)
\ee
One can think of this number as corresponding to the number of LLL states for a line bundle  of degree $\deg\L/\beta$, by the Riemann-Roch formula \eqref{RRoch}. In other words, only a fraction of the all possible states in $H^0(\Sigma, \L)$ is filled.

\begin{maindefn}\label{Lsdef}
The Laughlin state $F(z_1,...,z_N)$ for the filling fraction $1/\beta$ on $\Sigma^N$ is defined as follows
\begin{enumerate}
\item Restriction $\pi_n F(z)$ to the $n$th component $\Sigma_n$ in $\Sigma^N$, i.e., $F(...,z_n,...)$ with all $z$'s but $z_n$ fixed, transforms as a holomorphic section of the line bundle $\L$, \eq{rmL}.

\item\label{2} For any pair $z_n$, $z_m$, $n\neq m$, the restriction to the diagonal $\pi_{nm}F(z_n,z_m)$ on $\Sigma_n\times\Sigma_m$ vanishes exactly to the order $\beta$ near the diagonal
\be\nonumber
\pi_{nm}F(z_n,z_m)\sim (z_n-z_m)^\beta,
\ee
in terms of a local complex coordinate.
\item\label{3} For $\beta$ odd $F$ is completely anti-symmetric and for $\beta$ even, completely symmetric on $\Sigma^N$.
\end{enumerate}
\end{maindefn}

\begin{rem}
Here $F(z_1,...,z_N)$ is the holomorphic part of the state, c.f. \eq{LaS}. The full wave function is properly $L^2$-normalized
$$
|\Psi(z_1,...,z_N)|^2=\frac{\Vert F(z_1,...,z_N)\Vert_h^2}{\langle\Vert F(z_1,...,z_N)\Vert_h^2\rangle_{L^2}},
$$
where $\Vert \cdot\Vert_h$ denotes the hermitian norm \eq{norm1} applied point-wise for each $z_n$ and $L^2$ norm on $\Sigma^N$ is induced from the $L^2$ norms on $\Sigma$ in obvious way.
\end{rem}

At $\beta=1$ this definition is actually equivalent to the Def.\ \ref{IQHS} of the integer QH state, due to the Fay's identity Prop.\ \ref{boso}, and thus the Laughlin state is unique. 
For a fixed $\beta>1$ there is more than one Laughlin states and they form a vector space, which we denote $\mathbb V_\beta$. The dimension of this vector space is conjectured to be $\dim \mathbb V_\beta=\beta^\g$ for genus-$\g$ Riemann surface \cite{WN}. Here we construct $\beta^\g$ linearly independent states in $\mathbb V_\beta$, provided the technical condition $N\geqslant\g$ holds. However we do not prove that this is a complete basis and $\dim \mathbb V_\beta=\beta^\g$. 

\begin{rem}
The gravitational spin $s$ is quantized in units of $\frac\beta2$, i.e., $s=0,\frac\beta2,\beta,$ etc. Other integer or half-integer values of $s$ can also be considered, at the expense of introducing quasi-holes \cite{L}. In other words, a completely filled $\nu=1/\beta$ ground state with the spin $s\neq\frac\beta2\mathbb Z$ will necessarily include at least one quasi-hole.
\end{rem}

\begin{prop}\label{indepst}
The following $\beta^\g$ Laughlin states are linearly independent 
\begin{align}\label{FL}\nonumber
&F_r(z_1,...,z_N)=\\
&=\vartheta\left[\begin{array}{c}\scriptstyle\frac r\beta\\\scriptstyle0\end{array}\right]\Big(\beta{\textstyle \sum_{n=1}^N} z_n-\beta\Delta-{\rm div}\,\L,\beta\tau\Big)
\cdot \prod_{n<m}^N{E(z_n,z_m)}^\beta\cdot\prod_{n=1}^N\sigma(z_n)^{\frac1\g(\beta N+2s-\beta)}
\end{align}
where $r\in(\mathbb N^+_{\leqslant\beta})^\g$ and we assume that $N\geqslant\g$.
\end{prop}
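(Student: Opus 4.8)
The plan is to verify in turn the three defining properties of a Laughlin state from Def.\ \ref{Lsdef} for each $F_r$, and then to prove linear independence of the $\beta^\g$ functions via their distinct quasi-periodicity behaviour along the $b$-cycles. First, for the single-variable holomorphicity (property (1)): each factor $E(z_n,z_m)$ is holomorphic with a first-order zero on the diagonal, and $\sigma(z_n)$ is holomorphic and non-vanishing on the 2-cell, so the only issue is the theta factor. Viewed as a function of a fixed $z_1$ with the others held fixed, $\vartheta\bigl[\begin{smallmatrix}\scriptstyle r/\beta\\\scriptstyle 0\end{smallmatrix}\bigr]\bigl(\beta\sum_n z_n-\beta\Delta-{\rm div}\,\L,\beta\tau\bigr)$ is of the form $\vartheta[\cdot](\beta\int_{P_0}^{z_1}\omega + \text{const},\beta\tau)$, a theta function with modular parameter $\beta\tau$; I will check it does not vanish identically (as in the proof of Prop.\ \ref{boso}, this would force $\dim \L/\beta > N$, contradicting \eqref{NNp}) and then compute its automorphy factors under $z_1\to z_1+a_j$ and $z_1\to z_1+b_j$ using \eqref{qp1} with period matrix $\beta\tau$, combining them with the transformation laws \eqref{transE}, \eqref{transE1} for the $E$-factors and \eqref{autosig} for the $\sigma$-factors. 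The total power of $\sigma$, namely $\frac1\g(\beta N+2s-\beta)=\frac1\g(\NPhi+2s(\g-1))=\frac1\g\deg\L$, is precisely what is needed so that, after all cancellations, the resulting automorphy factors coincide with those of $\L$ in Prop.\ \ref{auto}; this is the bookkeeping core of the argument, essentially a $\beta$-scaled version of the computation behind \eqref{boso}.

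Second, property (2) is immediate: near $z_n=z_m$ the prefactor $\prod_{n<m}E(z_n,z_m)^\beta$ contributes exactly $(z_n-z_m)^\beta$ by the near-diagonal expansion \eqref{neard}, while the theta factor and the $\sigma$-factors are regular and generically non-vanishing on the diagonal, so $\pi_{nm}F_r\sim(z_n-z_m)^\beta$. Property (3), the (anti)symmetry, follows because $\prod_{n<m}E(z_n,z_m)^\beta$ picks up $(-1)^\beta$ under a transposition $z_n\leftrightarrow z_m$ (antisymmetry of the prime form, \S\ref{primeform}), whereas the theta factor depends on the $z$'s only through the symmetric combination $\sum_n z_n$ and $\prod_n\sigma(z_n)$ is manifestly symmetric; hence $F_r$ is antisymmetric for $\beta$ odd and symmetric for $\beta$ even, as required.

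Third, and this is where the hypothesis $N\geqslant\g$ enters, I will prove linear independence. Fix all variables except allow $w:=\sum_{n=1}^N z_n$ to range over $Jac(\Sigma)$; by the Jacobi inversion theorem \cite[III.6.6]{FKra}, since $N\geqslant\g$ the Abel-sum map $\Sigma^N\to Jac(\Sigma)$ is onto, so two of the $F_r$ agreeing on $\Sigma^N$ would force the corresponding ratios of theta factors to agree as functions on the Jacobian. The functions $\vartheta\bigl[\begin{smallmatrix}\scriptstyle r/\beta\\\scriptstyle 0\end{smallmatrix}\bigr](\beta e - c,\beta\tau)$, for $r\in(\mathbb N^+_{\leqslant\beta})^\g$, are $\beta^\g$ sections of one and the same line bundle on $Jac(\Sigma)$ (the theta factors of $F_r$ and $F_{r'}$ have the same automorphy, since $r/\beta$ is an integer characteristic shift), and they are linearly independent because they have distinct quasi-periods: under $e\to e+\tau_j$ (the $j$-th column of $\tau/\beta$, equivalently a shift making the argument jump by a $\beta\tau$-lattice vector scaled by $1/\beta$) they transform by the distinct phases $e^{2\pi i r_j/\beta}$ coming from the characteristic $r/\beta$ in \eqref{qp1}. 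Concretely, a nontrivial relation $\sum_r c_r F_r\equiv0$ on $\Sigma^N$ descends, via surjectivity of the Abel map, to $\sum_r c_r\,\vartheta\bigl[\begin{smallmatrix}\scriptstyle r/\beta\\\scriptstyle0\end{smallmatrix}\bigr](\beta e-c,\beta\tau)\equiv0$ on $\mathbb C^\g$; applying the $\tau_j$-translation and reading off the $\beta$ distinct eigen-phases in each of the $\g$ coordinate directions (a Vandermonde / Fourier-decoupling argument on $(\mathbb Z/\beta)^\g$) forces all $c_r=0$.

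\textbf{Expected main obstacle.} The only genuinely delicate point is the automorphy computation for property (1): one must show that the $b$-cycle multiplier of the theta factor — built from \eqref{qp1} at modular parameter $\beta\tau$ and argument linear in $\int_{P_0}^{z_1}\omega$ with the shift $-\beta\Delta-{\rm div}\,\L$ — combines with the $\beta$-th powers of the prime-form multipliers \eqref{transE1} and with $\sigma^{\deg\L/\g}$ from \eqref{autosig} to reproduce exactly $\chi_{b_j}$ in \eqref{auto1}, including the $e^{-\pi i\frac\NPhi\g\tau_{jj}}$, the $e^{2\pi i(2s\delta_j+\varphi_j)}$, the Abel-map term $e^{2\pi i I_j[D_\m-\NPhi z]}$, and the $e^{2\pi i\frac\NPhi\g\Delta_j^z}$ pieces; the choice of exponent $\frac1\g(\beta N+2s-\beta)$ on $\sigma$ and the shift $\beta\Delta+{\rm div}\,\L$ inside $\vartheta$ are exactly calibrated for this, but tracking the factors of $\beta$ through the scaled lattice is where sign and normalisation errors are easiest to make. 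Everything else is routine given Prop.\ \ref{thetadivisor}, Prop.\ \ref{auto}, \eqref{neard}, and the Jacobi inversion theorem.
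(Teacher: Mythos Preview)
Your proposal is correct and follows essentially the same route as the paper's proof. The only presentational differences are: (i) for property~(1) the paper supplements the automorphy check with a one-line divisor argument via the Riemann vanishing theorem (showing the theta factor, as a function of $z_1$, has divisor linearly equivalent to ${\rm div}\,\L-\beta\sum_{n\geq2}z_n$), which you replace by a non-vanishing check; and (ii) for linear independence the paper simply cites that the space of order-$\beta$ theta functions has dimension $\beta^\g$ \cite[Prop.~1.3]{M}, whereas you unpack this via the distinct $b$-cycle phases $e^{2\pi i r_j/\beta}$ and a Fourier/Vandermonde decoupling --- this is exactly how that cited result is proved, so the difference is one of detail, not strategy.
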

\begin{proof}
The conditions \ref{2} and \ref{3} of Def.\ \ref{Lsdef} are obviously satisfied, so we need to check that the rhs of \eq{FL} is a section of $\L$. Due to the (anti-)symmetry it is enough to check this for $z_1$. 

We note that the argument of the theta-function above does indeed span all of $Jac(\Sigma)$ for $N\geqslant\g$ by the the Jacobi inversion theorem \cite[III.6.6]{FKra}. Then the theta function $\theta[r/\beta,0](\beta (z-\Delta-\xi),\beta\tau)$ is the order-$\beta$ theta function on the Jacobian and by Riemann vanishing Prop.\ \ref{RV} its divisor is linearly equivalent to $\beta\xi$ for any $r=(1,...,\beta)^\g$. Hence the divisor of $F(z_1,...)$ \eq{FL} is linearly equivalent to
$$
-\beta\sum_{n=2}^Nz_n+{\rm div}\,\L+\beta\sum_{n=2}^N z_n={\rm div}\,\L.
$$
Using the lattice shift formulas for the Prime form \eq{transE} and the theta function \eq{qp1} we can then check that the rhs of \eq{FL} transforms as a section of $\L$ with the automorphy factors given by Prop.\ \ref{auto}.  
Finally, the dimension of the vector space of $\beta$th order theta functions in $\beta^\g$, see e.g. \cite[Prop.\ 1.3]{M}.
\end{proof}

In the next section we explain how the Laughlin states \eq{FL} can be generated from the path integral of a scalar field compactified on a circle. 


\section{Laughlin states from free fields}
\label{freef}

\subsection{Scalar field compactified on a circle}

The standard references for the gaussian path integrals on Riemann surfaces include \cite{ABMNV,VV,DP}, in this section we spell out the definition and the construction step by step. 

We write the conformal metric $g$ on $\Sigma$ in local complex coordinates as $d^2s=2g_{z\bz}dzd\bz$. Consider the real-valued free field $\sigma(z,\bz)$ on $\Sigma$, compactified on a circle of radius $R_c$, $\sigma\sim\sigma+2\pi R_c$, with the action functional
\begin{equation}
\label{action}
S(\sigma)=\frac1{2\pi}\int_\Sigma\bigl(\p_z\sigma\p_{\bz}\sigma+\frac{ i q}4\sigma R\sqrt g+i\sqrt\nu \sigma B\sqrt g\bigr)d^2z,
\end{equation}
where $R=-2g^{z\bz}\p_z\p_{\bz}\log g$ is the scalar curvature of the metric $g$, $\nu=1/\beta$ is the filling fraction and the real-valued constant $q$ is to be fixed later.
The scalar function $B$ is the magnetic field with the integer total flux over the surface equal to $\NPhi$, as in \eq{quant}.
Next, following \cite{MR}, we consider the string of vertex operators at the points $z_1,...,z_N$ and the expectation value given by the path integral
\begin{equation}\label{pathi}
\mathcal V\bigl(g,B,\{z_n\}\bigr)=\int e^{ i\sqrt\beta{\Large\sigma}(z_1,\bz_1)}\cdots e^{ i\sqrt\beta\sigma(z_N,\bz_N)}e^{-S(\sigma)}\mathcal D_g\sigma.
\end{equation}
The conformal dimension $s$ of the vertex operator $e^{i\sqrt{\beta}\sigma(z,\bz)}$ is given by
$
s=\frac12\sqrt\beta(\sqrt\beta-q)
$
and we can express $q$ in terms of $\sqrt\beta$ and $s$ as
$$
q=\sqrt\beta-\frac{2s}{\sqrt\beta}.
$$

Several constraints need to be imposed in order to make this setup well defined. First, 
the integration over the constant zero-mode of the field $\sigma$ leads to the relation
\begin{equation}
\label{N}
N=\frac1\beta\NPhi+\left(\frac{2s}\beta-1\right)(\g-1),
\end{equation}
and as in \eq{NNp} we assume $\NPhi$ and $2s$ are both multiples of $\beta$.
Next, we require that the action and the vertex operators in \eq{pathi} are single-valued
under the identification $\sigma\sim\sigma+2\pi R_c$,
\begin{equation}\nonumber
\sqrt\beta R_c\in\mathbb Z,\quad \frac1{\sqrt\beta}\NPhi R_c +q(1-\g)R_c\in\mathbb Z.
\end{equation}
Hence the following choice of the compactification radius 
\be\label{compr}
R_c=\sqrt\beta
\ee
can be assumed.

The goal of this section is to derive an explicit formula for $\mathcal V\bigl(g,B,\{z_n\}\bigr)$ for any choice of the metric and magnetic field. Then in Prop.\ \ref{nugB} we show that for a special choice of the metric (Arakelov metric \eq{garak}) and magnetic field \eq{magnchoice} the path integral \eq{pathi} is a generating functional for the Laughlin states on this background geometry.

\subsection{Instanton configurations}

We begin by splitting the free field $\sigma$ into the classical part $\sigma_{ cl}$ and quantum part $\sigma_{qu}$,
$$
\sigma(z,\bz)=\sigma_{cl}(z,\bz)+\sigma_{qu}(z,\bz).
$$ 
Here $\sigma_{qu}(z,\bz)$ is single-valued on $\Sigma$, while the classical part $\sigma_{cl}(z,\bz)$ is multi-valued. The latter encodes all nontrivial field configurations on $\Sigma$ satisfying $\sigma_{cl}\sim\sigma_{cl}+2\pi R_c\mathbb Z$, when taken around any topologically nontrivial one-cycle on the Riemann surface. 

Classical field configurations are labelled by two sets of integers $m,m'\in\mathbb Z^\g$, $\sigma_{cl}=\sigma_{mm'}$. The subscript $mm'$ indicates the configuration winding $m_j$, resp.\ $m_j'$ times under continuation along the $a_j$-, resp.\ $b_j$-cycle  
\begin{equation}\label{cuts}
\sigma_{mm'}(z+a_j)=\sigma_{mm'}(z)+2\pi R_cm_j,\quad {\rm resp.} \quad 
\sigma_{mm'}(z+b_j)=\sigma_{mm'}(z)-2\pi R_cm'_j.
\end{equation} 
This classifies all nontrivial maps of $\Sigma$ into a circle, also called instanton configurations. The general solution to \eqref{cuts} for a given marking of $\Sigma$ can be written down in terms of the harmonic one-forms \eqref{harmf} as follows 
\begin{equation}\nonumber
\sigma_{mm'}(z,\bz)=2\pi R_c\sum_{j=1}^\g\left(m_j\int_{P_0}^z\alpha_j-m'_j\int_{P_0}^z\beta_j\right),
\end{equation}
where $P_0\in\Sigma$ is the base point.
Using \eq{alpha1} we can also express $\sigma_{mm'}$ in terms of the holomorphic differentials and the period matrix 
\begin{equation}\label{inst}
\sigma_{mm'}(z,\bz)=\pi i R_c\left((m'+m\bar\tau)_j{(\Im\tau)^{-1}}_{jl}\int_{P_0}^z\omega_l-(m'+m\tau)_j{(\Im\tau)^{-1}}_{jl}\int_{P_0}^z\bar\omega_l\right),
\end{equation}
where as usual the sum over repeated indices $j,l$ is understood.

The quantum part of the field $\sigma_{qu}$ splits 
$$
\sigma_{qu}(z,\bz)=\sigma_0+\tilde\sigma(z,\bz)
$$
into the constant mode $\sigma_0$ and fluctuating component $\tilde\sigma(z,\bz)$ orthogonal to $\sigma_0$ in the $L^2$ metric,
$$
\int_\Sigma\tilde\sigma\sqrt gd^2z=0.
$$
The measure splits as $D_g\sigma=d\sigma_0D_g\tilde\sigma$ and the integration over $\sigma_0$ yields the constraint \eqref{N}. Since the action is the sum of the classical and quantum parts, $S(\sigma)=S(\sigma_{mm'})+S(\sigma_{qu})$, the path integral \eq{pathi} splits as a product 
\be\label{productV}
\mathcal V\bigl(g,B,\{z_n\}\bigr)=Z_{\rm cl}\cdot Z_{\rm qu},
\ee
Here $Z_{\rm cl}$ is the sum over all instanton configurations
\begin{equation}\label{instsum}
Z_{\rm cl}=\left(\frac{R_c}{\sqrt 2}\right)^\g\sum_{m,m'\in\mathbb Z^\g}e^{i\sqrt{\beta}\sum_{n=1}^N\sigma_{mm'}(z_n,\bz_n)-S(\sigma_{mm'})},
\end{equation}
where we introduced a convenient normalization constant, and $Z_{\rm qu}$ will be computed in \S \ref{Zq}. Before evaluating the sum above we need to compute $S(\sigma_{mm'})$.

\subsection{Linear terms in the action}
Since $\sigma_{mm'}(z,\bz)$ is multi-valued on $\Sigma$ the linear terms in the action \eq{action} are not well-defined as they stand. The proper definition was given in Refs.\ \cite{VV,ABMNV}. Here we adopt with minor changes an equivalent definition proposed in Ref.\ \cite{GMMOS}, since it will be suitable for the magnetic field term in \eq{action} as well.

Consider the canonical dissection of the Riemann surface, where one contracts the $a,b$-cycles to the base point $P_0$ and cuts the surface along the resulting cycles, as illustrated in Fig.\ \ref{fig:dissect} below for a surface of genus two.

\begin{figure}[h]
\begin{center}
\begin{tikzpicture}[smooth cycle,scale=0.45]
\tikzset{->-/.style={decoration={
  markings,
  mark=at position .5 with {\arrow{>}}},postaction={decorate}}}
  \tikzset{-<-/.style={decoration={
  markings,
  mark=at position .5 with {\arrow{<}}},postaction={decorate}}}
\draw [black,very thick,bend right] (6,6.35) arc (50:310:3.5);
\draw [black,very thick,bend right] (10,1) arc (-130:130:3.5);
\draw [black,very thick,bend right] (10.01,0.99) edge (5.96,.96);
\draw [black,very thick,bend right] (6,6.35) edge (10.04,6.39);
\draw [black,very thick,bend right] (2,3.7) edge (4.8,3.7);
\draw [black,very thick,bend left] (2.3,3.55) edge (4.5,3.55);
\draw [black,very thick,bend right] (11,3.7) edge (13.8,3.7);
\draw [black,very thick,bend left] (11.3,3.55) edge (13.5,3.55);
\draw [red,very thick,bend left,decoration={markings, mark=at position .65 with {\arrow{<}}},postaction={decorate}] (4.45,3.5) ellipse (100pt and 37pt);
\draw [blue,very thick,bend right] (8.05,3.6) edge [->-] (2.5,6.94);
\draw [blue,very thick,bend left]  (3,3.82) edge [->-] (8,3.6);
\draw [green,very thick,bend left,decoration={markings, mark=at position .15 with {\arrow{<}}},postaction={decorate}] (4.55+7,3.5) ellipse (100pt and 37pt);
\draw [cyan,very thick,bend left] (8,3.6) edge [-<-] (11.4,3.55);
\draw [cyan,very thick,bend left] (9.5,1.25) edge [-<-] (8,3.6);
\draw [blue,dashed,very thick,bend right] (2.5,6.94) edge (3,3.82);
\draw [cyan,dashed,very thick,bend left] (11.4,3.55) edge (9.5,1.25);
        \draw[black] (8.15,3.95) node [above]{$P_0$};
        \fill [black] (8,3.6) circle (3pt);
  \end{tikzpicture}\quad\quad
\begin{tikzpicture}[scale=0.6]
\tikzset{->-/.style={decoration={
  markings,
  mark=at position .5 with {\arrow{>}}},postaction={decorate}}}
  \tikzset{-<-/.style={decoration={
  markings,
  mark=at position .5 with {\arrow{<}}},postaction={decorate}}}
\draw [red,very thick,bend left] (0,0) edge [->-] (2,1);
\draw [blue,very thick,bend left] (2,1) edge [->-] (3,3);
\draw [red,very thick,bend left] (3,3) edge [-<-] (2,5);
\draw [blue,very thick,bend left] (2,5) edge [-<-] (0,6);
\draw [green,very thick,bend left] (0,6) edge [->-] (-2,5);
\draw [cyan,very thick,bend left] (-2,5) edge [->-] (-3,3);
\draw [green,very thick,bend left] (-3,3) edge [-<-] (-2,1);
\draw [cyan,very thick,bend left] (-2,1) edge [-<-] (0,0);
    \draw[black] (0,3) node[]{$\Sigma_0$};
    \draw[black] (1.3,0.9) node[below]{$b_1^+$};
        \draw[black] (2.7,2.7) node[below]{$a_1^+$};
    \draw[black] (2.5,4) node[above]{$b_1^-$};
    \draw[black] (.8,5.3) node[above]{$a_1^-$};
        \draw[black] (-1.5,5.2) node[above]{$b_2^+$};
    \draw[black] (-2.8,3.4) node[above]{$a_2^+$};
    \draw[black] (-2.4,2) node[below]{$b_2^-$};
        \draw[black] (-.7,.7) node[below]{$a_2^-$};
  \end{tikzpicture}
{\small \caption{Canonical dissection of the genus-2 Riemann surface.}
\label{fig:dissect}}
\end{center}
\end{figure}
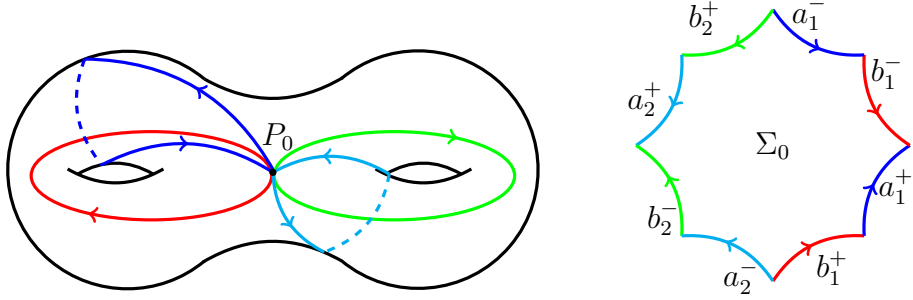

The resulting simply-connected 2-cell $\Sigma_0$ has the boundary
\begin{equation}\label{dsigma}
\p \Sigma_0=\sum_{j=1}^\g\bigl(a_j^++b_j^+-a_j^--b_j^-\bigr),
\end{equation}
where $a_j^+,b_j^+$ (resp. $a_j^-,b_j^-$) are left (resp. right) sides of cuts along cycles $a_j,b_j$.

Then the curvature term in the action is defined as integral over $\Sigma_0$ with the correction term on the boundary
\begin{equation}
\label{Rsigma}
\frac1{8\pi}\int_\Sigma\sigma_{mm'} R\sqrt gd^2z:=\frac1{8\pi}\int_{\Sigma_0}\sigma_{mm'} R\sqrt gd^2z
+\frac1{4\pi}\int_{\p\Sigma_0}\sigma_{mm'}\Omega,
\end{equation}
where we introduced the one-form $\Omega$, satisfying
\begin{equation}\label{Omega}
d\Omega=-R_{z\bz}\,idz\wedge d\bz-4\pi\sum_{\alpha=1}^{\g-1}\delta(z-p_\alpha)\,\frac i2dz\wedge d\bz.
\end{equation}
The delta functions are supported on a positive divisor $\tilde D$ of degree $\g-1$,
\begin{equation}
\label{tildeD}
\tilde D=\sum_{\alpha=1}^{\g-1} p_\alpha,
\end{equation} 
such that its doubling $2\tilde D\equiv C$ is in the canonical class. This can be taken as one of the divisors $D_\delta$ corresponding to an odd theta-caracteristic, as described in \eq{Ddelta}.
It is straightforward to see that $\int_{\Sigma_0}d\Omega=0$.

Now, let $\eta(z)$ be a holomorphic section of the line bundle $\mathcal O(D_\delta)$ with the divisor of zeroes $(\eta)=D_\delta$. Then we can write \eq{Omega} in the form
\begin{equation}\label{domega}
d\Omega=dd^c\log\frac{g_{z\bz}^{1/2}}{|\eta(z)|^2}
\end{equation}
where $dd^c\cdot=2idz\wedge d\bz\;\p_z\p_{\bz}\;\cdot$, following \eq{ddc}, and we used the relation 
\begin{equation}\label{deltafunc}
\p_z\p_{\bz}\log|z-\alpha|^2=\pi\delta(z-\alpha)
\end{equation} 
in a local complex coordinate $z$ around each $p_\alpha$. Then we can solve for $\Omega$ as
\begin{equation}\nonumber
\Omega=d^c\log\frac{g_{z\bz}^{1/2}}{|\eta(z)|^2},
\end{equation}
with an ambiguity of adding an arbitrary closed differential, which we here set to zero, see the discussion later in \S \ref{magnft}.

\begin{prop}\label{indep1}
The linear term as defined in \eq{Rsigma} depends only on the divisor class $[\tilde D]$ and is independent of the choice of the metric $g$ in the same conformal class. 
\end{prop}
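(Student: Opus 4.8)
\emph{Proof sketch.} The plan is to first recast the right side of \eq{Rsigma} in a form where the two invariances are manifest, and then read them off. Writing $\tfrac1{8\pi}R\sqrt g\,d^2z=\tfrac1{4\pi}R_{z\bz}\,idz\wedge d\bz$ and using \eq{domega} to identify this smooth $2$-form with $\tfrac1{4\pi}d\Omega$ on $\Sigma\setminus\tilde D$, one integrates by parts over $\Sigma_0$ with small discs deleted around the points of $\tilde D$. The boundary term over $\p\Sigma_0$ generated this way cancels the explicit boundary term already present in \eq{Rsigma}; the discs around the points of $\tilde D$ produce $-\sum_{p\in\tilde D}\sigma_{mm'}(p)$ (counted with multiplicity), which is precisely the $\delta$-function content of \eq{Omega} — indeed $\Omega$ has the local form $-2k_p\,d\theta+(\text{smooth})$ near $p$, where $k_p$ is the order of $\eta$ at $p$; and what remains is $\tfrac1{4\pi}\int_{\Sigma_0}d\sigma_{mm'}\wedge\Omega$. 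The key point — true even though $\sigma_{mm'}$ is multivalued — is that $d\sigma_{mm'}=2\pi R_c\sum_j(m_j\alpha_j-m_j'\beta_j)$ is a \emph{globally defined harmonic} $1$-form on $\Sigma$; hence $d\sigma_{mm'}\wedge\Omega$ descends to an integrable $2$-form on $\Sigma$ (its $\tfrac1r$-type density at $\tilde D$ is integrable in two real dimensions), so $\int_{\Sigma_0}d\sigma_{mm'}\wedge\Omega=\int_\Sigma d\sigma_{mm'}\wedge\Omega$ and \eq{Rsigma} equals
\be\nonumber
\frac1{4\pi}\int_\Sigma d\sigma_{mm'}\wedge\Omega\;-\;\sum_{p\in\tilde D}\sigma_{mm'}(p),\qquad \Omega=d^c\log\frac{g_{z\bz}^{1/2}}{|\eta(z)|^2},\quad (\eta)=\tilde D.
\ee

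From this formula conformal invariance is immediate. Under $g_{z\bz}\mapsto e^{2\phi}g_{z\bz}$ with $\phi\in\mathcal C^\infty(\Sigma)$, neither $\tilde D$ nor $d\sigma_{mm'}$ changes, while $\Omega\mapsto\Omega+d^c\phi=\Omega+\star d\phi$; so the linear term changes by $\tfrac1{4\pi}\int_\Sigma d\sigma_{mm'}\wedge\star d\phi=\tfrac1{4\pi}\int_\Sigma d\phi\wedge\star d\sigma_{mm'}$. Since $d\sigma_{mm'}$ is harmonic, $\star d\sigma_{mm'}$ is closed and globally defined, so $d\phi\wedge\star d\sigma_{mm'}=d(\phi\,\star d\sigma_{mm'})$ integrates to zero over the closed surface $\Sigma$ — equivalently, a harmonic $1$-form is $L^2$-orthogonal to exact forms. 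The only freedom in the auxiliary section $\eta$ once $\tilde D$ is fixed is rescaling $\eta\mapsto c\eta$ (two holomorphic sections with the same zero divisor differ by a constant), and this changes neither $\Omega$ nor $\tilde D$.

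Finally one must compare different positive representatives of the class $[\tilde D]$, which can occur only for the exceptional theta characteristics with $h^0(\mathcal O(D_\delta))>1$. If $\tilde D'=\tilde D+(f)$ is another positive degree-$(\g-1)$ divisor with $2\tilde D'\equiv C$, take $\eta\mapsto f\eta$; then $\Omega\mapsto\Omega-d^c\log|f|^2$ and $\sum_{p\in\tilde D}\sigma_{mm'}(p)\mapsto\sum_{p\in\tilde D}\sigma_{mm'}(p)+\sum_a n_a\sigma_{mm'}(Q_a)$, where $(f)=\sum_a n_aQ_a$. The $\Omega$-variation contributes $-\tfrac1{4\pi}\int_\Sigma d\sigma_{mm'}\wedge d^c\log|f|^2$, which again vanishes by the same argument: $\log|f|^2$ is single-valued on $\Sigma$ with only logarithmic singularities at $(f)$, contributing boundary terms that vanish as the deleted discs shrink (the relevant circle integrals of the closed form $\star d\sigma_{mm'}$ are zero). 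Hence the linear term changes by $-\sum_a n_a\sigma_{mm'}(Q_a)$. By the Riemann bilinear relations applied to the third-kind differential $d\log f$ — whose $a$- and $b$-periods lie in $2\pi i\mathbb Z$ — paired with the normalized harmonic forms $\alpha_j,\beta_j$ of \eq{harmf}, the numbers $\sum_a n_a\int_{P_0}^{Q_a}\alpha_j$ and $\sum_a n_a\int_{P_0}^{Q_a}\beta_j$ are integers, so $\sum_a n_a\sigma_{mm'}(Q_a)\in 2\pi R_c\mathbb Z$ — precisely the periodicity lattice of $\sigma_{mm'}$ itself. Thus the linear term is fixed by $[\tilde D]$ up to this intrinsic ambiguity; and since it enters \eq{action}/\eq{pathi} only through the phase $e^{-iq\cdot(\text{linear term})}$ with $qR_c=\beta-2s\in\mathbb Z$, the residual shift is invisible in the path integral.

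The step I expect to require the most care is the Stokes reduction of the first paragraph: one has to manage simultaneously the jumps of the multivalued $\sigma_{mm'}$ across the cuts $a_j^\pm,b_j^\pm$ of $\p\Sigma_0$ (which turn out to cancel between the two occurrences of the $\p\Sigma_0$-integral) and the precise singular behaviour of $\Omega$ at the points of $\tilde D$, checking that the vanishing discs reproduce exactly the $\delta$-functions of \eq{Omega} and leave behind precisely $-\sum_{p\in\tilde D}\sigma_{mm'}(p)$. Once the clean formula is established, both claims of the proposition reduce to the harmonicity of $d\sigma_{mm'}$ together with Hodge orthogonality.
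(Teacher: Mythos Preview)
Your argument is correct and takes a somewhat different route from the paper's. The paper does not first derive your closed formula $\tfrac{1}{4\pi}\int_\Sigma d\sigma_{mm'}\wedge\Omega-\sigma_{mm'}(\tilde D)$; instead it computes the \emph{difference} of \eq{Rsigma} under each variation directly. For the divisor change $\tilde D\to\tilde D'$ only the boundary term varies, and the paper evaluates $\int_{\partial\Sigma_0}\sigma_{mm'}\,d^c\log|\eta'/\eta|^2$ by inserting the jumps \eq{cuts} of $\sigma_{mm'}$ across the edges $a_j^\pm,b_j^\pm$, reducing everything to the $a$- and $b$-periods of $d\log(\eta'/\eta)$, which lie in $2\pi i\mathbb Z$. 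For the conformal change $g\to e^\lambda g$ the paper, after one Stokes step on $\Sigma_0$, arrives at $\int_{\Sigma_0}d\sigma_{mm'}\wedge d^c\lambda$ and then uses --- exactly as you do --- that $d\sigma_{mm'}$ is a global harmonic one-form to show this vanishes. Your approach trades these two direct boundary computations for a single, slightly more delicate, Stokes reduction through the singularities of $\Omega$; in return you get a unified formula from which both invariances and \eq{Rsigma1} are read off at once. Both routes ultimately rest on the same two ingredients (harmonicity of $d\sigma_{mm'}$ and integrality of the periods of $d\log f$/Abel's theorem), and both conclude that the divisor-class dependence is only modulo $2\pi R_c\mathbb Z$, invisible in the path integral since $qR_c\in\mathbb Z$.
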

\begin{proof}
Changing the divisor $\tilde D=(\eta)$ to $\tilde D'=(\eta')$ in the same class $\tilde D\equiv\tilde D'$ the ratio $\eta'/\eta$ is a meromorphic function on $\Sigma$.
The linear term \eq{Rsigma} then changes as
\begin{multline}\nonumber
\frac{1}{8\pi}iq\int_\Sigma\sigma_{mm'} R\sqrt gd^2z\big|_{\tilde D}-\frac{1}{8\pi}iq\int_\Sigma\sigma_{mm'} R\sqrt gd^2z\big|_{\tilde D'}=\frac1{4\pi}iq\int_{\p\Sigma_0}\sigma_{mm'}d^c\log\left|\frac{\eta'}{\eta}\right|^2\\
=\frac1{4\pi}iq\sum_{j=1}^\g\left(\int_{a_j}\Delta_{b_j}\sigma_{mm'}\,d^c\log\left|\frac{\eta'}{\eta}\right|^2-\int_{b_j}\Delta_{a_j}\sigma_{mm'}\,d^c\log\left|\frac{\eta'}{\eta}\right|^2\right)\\
=\frac{iqR_c}{2}\sum_{j=1}^\g\left(m_j'\int_{a_j}d^c\log\left|\frac{\eta'}{\eta}\right|^2+m_j\int_{b_j}d^c\log\left|\frac{\eta'}{\eta}\right|^2\right)\\
=\frac{iqR_c}{2}\sum_{j=1}^\g\left(m_j'\int_{a_j}\left(i\p_z\log \frac{\eta'}{\eta}\,dz+c.c\right)+m_j\int_{b_j}\left(i\p_z\log \frac{\eta'}{\eta}\,dz+c.c\right)\right)=2\pi iR_cq\cdot\mathbb Z,
\end{multline}
and $qR_c\in\mathbb Z$, \eq{compr}.
In the second line the jumps $\Delta_{a_j}\sigma_{mm'}$ and $\Delta_{b_j}\sigma_{mm'}$ are given by \eq{cuts}. In the last line we use the fact that the meromorphic function $\eta'/\eta$ can only change by a factor $e^{2\pi in},n\in\mathbb Z$ under continuation along any closed one-cycle.

Next we show independence on the choice of the metric $g$ in the same conformal class. Let $g'_{z\bz}=e^{\lambda(z,\bz)}g_{z\bz}$ is the new metric in the same class, i.e. $\lambda(z,\bz)$ is a global real-valued scalar function on $\Sigma$. We have
\begin{multline}\nonumber
\frac1{8\pi}\int_\Sigma\sigma_{mm'} R\sqrt gd^2z-\frac1{8\pi}\int_\Sigma\sigma_{mm'} R(g')\sqrt{g'}d^2z\\=\frac1{8\pi}\int_{\Sigma_0}\sigma_{mm'} dd^c\lambda
-\frac1{8\pi}\int_{\p\Sigma_0}\sigma_{mm'} d^c\lambda
=\frac1{8\pi}\int_{\Sigma_0}d\left(\sigma_{mm'} d^c\lambda\right)\\
-\frac1{8\pi}\int_{\Sigma_0}d\sigma_{mm'} \wedge d^c\lambda-\frac1{8\pi}\int_{\p\Sigma_0}\sigma_{mm'} d^c\lambda
=-\frac1{8\pi}\int_{\Sigma_0}d\sigma_{mm'}\wedge d^c\lambda\\=
-\frac1{8\pi}\sum_{j=1}^\g\int_{\Sigma_0}(c_j\omega_j+\bar c_j\bar\omega_j)\wedge d^c\lambda
=\frac1{8\pi}\sum_{j=1}^\g\int_{\Sigma_0}d^c(\lambda c_j\omega_j+\lambda\bar c_j\bar\omega_j)=\\
\frac1{8\pi}\sum_{j=1}^\g\int_{\Sigma_0}d(i\lambda c_j\omega_j-i\lambda\bar c_j\bar\omega_j)=
\frac1{8\pi}\sum_{j=1}^\g\int_{\p\Sigma_0}(i\lambda c_j\omega_j-i\lambda\bar c_j\bar\omega_j)=0.
\end{multline}
Here in the third line we make use of the relation $d\sigma_{mm'}=\sum_{j=1}^\g c_j\omega_j+\bar c_j\bar\omega_j$, where the complex numbers $c_j,\bar c_j$ can be inferred from \eq{inst}. In the last line we use the fact that $\lambda$ is a global scalar function, i.e., it has no jumps along the cycles and the integrals over $a_j^{+},b_j^{+}$ segments of the boundary cancel out with those over $a_j^{-},b_j^{-}$, due to the direction reversal in \eq{dsigma}.
\end{proof}

Now we can use this proposition to compute \eq{Rsigma}. 
First we choose the divisor corresponding to a particular odd half characteristic $\delta$. Without loss of generality we can use the one introduced in \eq{degDd},
\begin{equation}\label{canon}
\tilde D=D_\delta.
\end{equation}
By \eq{del}, $\delta=I[D_\delta]+\Delta$, where $\Delta$ is the vector of Riemann constants and there exists a holomorphic differential $\omega_\delta$ with the divisor of zeroes $(\omega_\delta)=2D_\delta$. Then we can construct the following singular Riemannian metric $g_{z\bz}^\delta$ on $\Sigma$
\begin{equation}\nonumber
2g_{z\bz}^\delta dzd\bz=2|\omega_\delta|^2,
\end{equation}
called the Mandelstam metric in Ref.\ \cite{DP2}.
With these choices $\Omega=0$ vanishes and we immediately obtain
\begin{equation}
\label{Rsigma1}
\frac1{8\pi}q\int_\Sigma\sigma_{mm'} R\sqrt gd^2z=
-q\sigma_{mm'}(D_\delta)+2\pi \mathbb Z.
\end{equation}
where again $D_\delta$ is the divisor \eqref{degDd}.

\subsection{Magnetic field term}
\label{magnft}

Now we treat the magnetic field term, i.e., the third term in the action \eq{action} along the same lines. We introduce the gauge connection one-form for the magnetic field $(1,1)$ form \eq{curvF},
\be\nonumber
F:=dA=(\p_zA_{\bz}-\p_{\bz}A_z)dz\wedge d\bz
\ee
and define
\begin{equation}\label{defB}
\frac{1}{2\pi}\int_\Sigma \sigma_{mm'} B\sqrt gd^2z:=\frac1{2\pi}\int_{\Sigma_0}\sigma_{mm'} B\sqrt gd^2z-\frac1{2\pi}\int_{\p\Sigma_0}\sigma_{mm'}\Lambda.
\end{equation}
Here the one-form $\Lambda$ satisfies
\begin{equation}\label{dLambda}
d\Lambda=F-2\pi\sum_{a=1}^\NPhi\delta(z-q_a)\frac i2dz\wedge d\bz,
\end{equation}
where points $q_a$ belong to the magnetic field divisor $D_\m$, \eq{degDm}.

In contrast to the curvature term \eqref{Rsigma}, where we fixed the line bundle $D_\delta$, here we would like to parameterize all line bundles in ${\rm Pic}_\NPhi(\Sigma)$. 
Let $s(z)$ be a holomorphic section of $\mathcal O(D_\m)$ with the divisor of zeroes $D_\m$. Then we can solve for $\Lambda$ in \eq{dLambda} as follows
\begin{equation}\label{alpha}
\Lambda=-\frac12d^c\log \bigl(h^{\NPhi}(z,\bz)|s(z)|^2\bigr)+\gamma,
\end{equation}
where $\gamma$ is an arbitrary closed differential and $h^\NPhi$ is the hermitian metric \eq{curvF}. We will now identify $\gamma$ with the flat connection part of the gauge field $A$, with nontrivial monodromies around the one-cycles. Then the moduli of flat connections will provide coordinates in ${\rm Pic}_\NPhi$.  

Note that the definition \eq{defB} is obviously insensitive to the shift of $\Lambda$ by an exact one-form $\Lambda\to\Lambda+df$. Hence we can parameterize an arbitrary closed differential $\gamma\in H^1(\Sigma,\mathbb R)$ \eqref{alpha} as a linear combination of harmonic one-forms
\begin{equation}\label{flatconn}
\gamma=2\pi\sum_{j=1}^{\rm g}(\varphi_{1j}\alpha_j-\varphi_{2j}\beta_j),
\end{equation}
with real coefficients $\varphi_{1j},\varphi_{2j}$.
This leads to the monodromies $e^{2\pi i\varphi_{1j}}$ around the cycle $a_j$ and $e^{-2\pi i\varphi_{2j}}$ around $b_j$. Since we phases are defined mod $2\pi i\, \mathbb Z$, we obtain that
$$(\varphi_{1j},\varphi_{2j})\in(\mathbb R/\mathbb Z)^{2\rm g}$$ 
take values in a $2\rm g$-dimensional torus. 

Using \eq{alpha1} we now recast \eq{flatconn} in terms of holomorphic differentials
\begin{equation}\nonumber
\gamma=\pi i\bar\varphi(\Im\tau)^{-1}\omega-\pi i\varphi(\Im\tau)^{-1}\bar\omega,
\end{equation}
and we identify the phases with the complex coordinate on $Jac(\Sigma)$
\begin{equation}\label{complphi}
\varphi=\varphi_{2}+\tau\varphi_{1}.
\end{equation}

The definition of the magnetic field term \eq{defB} is independent of the choice of the hermitian metric $h^{\NPhi}$ and depends only on the divisor class of $D_\m$. The proof is completely analogous to the proof of Prop.\ \ref{indep1} and we do not repeat it here. Now we are ready to evaluate the linear term in the action \eq{defB}.

Independence of the choice of $h^{\NPhi}$ allows us to choose a convenient hermitian metric on $L^{\NPhi}$. Let $\tilde s(z)$ be an arbitrary  holomorphic section $L^{\NPhi}$ and consider the singular metric $h^\NPhi=1/|\tilde s|^2$. The curvature \eqref{curvF} of this metric is supported on the divisor ${\rm div}\,\tilde s=(\tilde s)$ of zeroes of $\tilde s(z)$. Since this divisor is in the equivalence class of $(\tilde s)\equiv D_\m$, we obtain,
\begin{multline}\label{defB1}
\frac{1}{2\pi}\int_\Sigma \sigma_{mm'} B\sqrt gd^2z=\sigma_{mm'}({\rm div}\,\tilde s)-\frac1{2\pi}\int_{\p\Sigma_0}\sigma_{mm'}\gamma\\
=\sigma_{mm'}({\rm div}\,\tilde s)-2\pi R_c\sum_{j=1}^\g\int_{\p\Sigma_0}\left(\gamma(z)\int_{P_0}^z(m_j\alpha_j-m'_j\beta_j)\right)\\
=\sigma_{mm'}(D_\m)-2\pi R_c\left(m'\varphi_1-m\varphi_2\right)+2\pi R_c \mathbb Z
\end{multline}
Here in the second line we use the Riemann bilinear relation \eq{r1} in the form \cite[III.3]{FKra}, 
\be\nonumber
\int_{\p \Sigma_0} \left(\gamma_2(P)\int_{P_0}^P\gamma_1\right)=\sum_{j=1}^\g\;\int_{a_j}\gamma_1\cdot\int_{b_j}\gamma_2-\int_{b_j}\gamma_1\cdot\int_{a_j}\gamma_2.
\ee
Then apply the Abel theorem,
$I[D'-D]=0\;\;{\rm mod}\,\Lambda,$ iff $D\equiv D'$, for $D=(\tilde s)$ and $D'=D_\m$
\begin{multline}\nonumber
\sigma_{mm'}({\rm div}\,\tilde s)-\sigma_{mm'}(D_{\rm m})\\=\pi i R_c\left((m'+m\bar\tau)_j{(\Im\tau)^{-1}}_{jl}\int_{D_\m}^{{\rm div}\,\tilde s}\omega_l-(m'+m\tau)_j{(\Im\tau)^{-1}}_{jl}\int_{D_\m}^{{\rm div}\,\tilde s}\bar\omega_l\right)\\=2\pi R_c\mathbb Z.
\end{multline}
Finally we can recast the last term in \eq{defB1} in term of the complex coordinate \eqref{complphi},
$$-2\pi R_c\left(m'\varphi_1-m\varphi_2\right)=
\sigma_{mm'}(D_\m)+\pi iR_c\left((m'+m\bar\tau)(\Im\tau)^{-1}\varphi-(m'+m\tau)(\Im\tau)^{-1}\bar\varphi\right).
$$
Now we are in a position to calculate the instantonic sum \eqref{instsum}.

\subsection{Computing the path integral}
\label{Zq}

First we compute the instantonic contribution to the quadratic term in the action
\begin{multline}\label{kinetic}
\frac1{2\pi}\int_\Sigma\p_z\sigma_{mm'}\p_{\bz}\sigma_{mm'} d^2z=\frac{i}{4\pi}\int_\Sigma\p_z\sigma_{mm'} dz\wedge\p_{\bz}\sigma_{mm'} d\bz\\
=\frac{\pi i}4R_c^2(m'+m\bar\tau)_j{(\Im\tau)^{-1}}_{jl}(m'+m\tau)_k{(\Im\tau)^{-1}}_{kp}
\int_\Sigma\omega_l\wedge\bar\omega_p\\
=\frac{\pi}2R_c^2(m'+m\bar\tau)_j{(\Im\tau)^{-1}}_{jl}(m'+m\tau)_l,
\end{multline}
and the summation over the indices $j,k,l,p=1,...,\g$ is understood.
In the second line we used 
$$\int_\Sigma\omega_l\wedge\bar\omega_p=-2i\,\Im\tau_{pl},$$
which follows from the Riemann's bilinear relation \eq{r1} and \eq{acycl}.

Putting together Eqns.\ (\ref{Rsigma1}, \ref{defB1}, \ref{kinetic}) we arrive at the following
expression for the sum \eq{instsum} over instanton configurations,
\begin{multline}\label{doublesum}
Z_{\rm cl}=\left(\frac{R_c}{\sqrt 2}\right)^\g\sum_{m,m'\in\mathbb Z^\g}\exp\left\{-\frac{\pi}2R_c^2(m'+m\bar\tau){(\Im\tau)^{-1}}(m'+m\tau)+i\sqrt{\beta}\sum_{n=1}^N\sigma_{mm'}(z_n,\bz_n)\right.
\\\left.+qi\sigma_{mm'}(D_\delta)-i\frac1{\sqrt{\beta}}\sigma_{mm'}(D_\m)
+\pi R_c\frac1{\sqrt{\beta}}\left((m'+m\bar\tau)(\Im\tau)^{-1}\varphi-(m'+m\tau)(\Im\tau)^{-1}\bar\varphi\right)\right\}\\
=\left(\frac{R_c}{\sqrt 2}\right)^\g\sum_{m,m'\in\mathbb Z^\g}\exp\left\{-\frac{\pi}2R_c^2(m'+m\bar\tau){(\Im\tau)^{-1}}(m'+m\tau)\right.
\\\left.-\pi\left((m'+m\bar\tau)(\Im\tau)^{-1}D-(m'+m\tau)(\Im\tau)^{-1}D\right)\right\}.
\end{multline}
Here $D$ is a shortcut notation for the following divisor
\begin{equation}\label{divD}
D=\beta \sum_{n=1}^Nz_n+\beta D_\delta-{\rm div}\,\L.
\end{equation}
By the charge conservation condition \eq{N} this divisor has degree zero and therefore $Z_{\rm cl}$ is independent of the choice of the base point $P_0$.

Next step in the calculation is applying the Poisson summation formula to the sum over $m'$ in \eq{doublesum}. This is a standard calculation which goes along the same lines as in, e.g., Ref.\ \cite{ABMNV,GMMOS}, and here we state the final result  
\begin{equation}\label{Zcl1}
Z_{\rm cl}=\frac1{2^\g}\sqrt{\det\Im\tau}\cdot e^{-\frac{2\pi}\beta\Im D\,(\Im \tau)^{-1}\Im D}\sum_{\varepsilon',\varepsilon''\in\{0,\frac12\}^\g}\;\sum_{r\in(\mathbb N^+_{\leqslant\beta})^\g}\;e^{4\pi i(\varepsilon',\varepsilon'')}\left|\vartheta\left[\begin{array}{c}\scriptstyle\frac r\beta+\varepsilon'\\\scriptstyle\varepsilon''\end{array}\right](D,\beta\tau)\right|^2
\end{equation}
Here the notation $r\in(\mathbb N^+_{\leqslant\beta})^\g$ means that $r$ takes values in $\g$ copies of the string of integers $(1,...,\beta)$. 

Instead of \eq{divD} it will be convenient to use the following notation 
\be\label{divD0}
D_0=\beta \sum_{n=1}^Nz_n-\beta \Delta-{\rm div}\,\L,
\ee
where $\Delta=\Delta^{P_0}$ is the vector of Riemann constants \eq{riemannC}.
Written in terms of $D_0$, \eq{Zcl1} has the form
\begin{align}\nonumber
Z_{\rm cl}=\frac1{2^\g}\sqrt{\det\Im\tau}\,\cdot\,& e^{-\frac{2\pi}\beta\Im D_0\,(\Im \tau)^{-1}\Im D_0}\\&\label{Zcl2}
\cdot\sum_{\varepsilon',\varepsilon''\in\{0,\frac12\}^\g}\;\sum_{r\in(\mathbb N^+_{\leqslant\beta})^\g}\;e^{4\pi i(\varepsilon',\varepsilon'')}\left|\vartheta\left[\begin{array}{c}\scriptstyle\frac r\beta+\varepsilon'+\delta' \\\scriptstyle\varepsilon''+\beta\delta''\end{array}\right](D_0,\beta\tau)\right|^2
\end{align}
with shifted theta characteristics. 

Now we turn to computing $Z_{\rm qu}$, which is the regularized infinite-dimensional gaussian integral of the free field $\tilde\sigma$ with zero mean. The computation is the same for the surfaces any genus $\g$ and here we closely follow Ref.\ \cite[\S 4.3]{K16}. The Green function for the scalar laplacian satisfies
\begin{align}
\label{green}
&\Delta_g G_{g}(z,y)=-2\pi\delta_g(z,y)+\frac{2\pi}{A_g},\\\label{green1}
&\int_\Sigma G_{g}(z,y)\sqrt{g}d^2y=0,
\end{align}
where $\delta_g(z,y)$ is defined so that $\int_\Sigma \delta_g(z,y)\sqrt gd^2z=1$.
The regularized Green function at coincident points is defined as
\begin{equation}
\label{reggreen}
G_{g}^{\rm reg}(z)=\lim_{z\to y}\bigl(G_{g}(z,y)+\log d_{g}(z,y)\bigr),
\end{equation}
where $d_{g}(z,y)$ is the geodesic distance between $y$ and $z$ in the metric $g$. For small distances $d_{g}(z,y)=\sqrt{g_{z\bz}}|z-y|+ \mathcal O(|z-y|^2)$ in local complex coordinates.

The quantum part of the path integral $Z_{\rm qu}$ has the form of gaussian integral with the (purely imaginary) linear term. Hence we can use the standard formula
\begin{equation}\label{zqu}
\int e^{-\frac1{2\pi}\int_{\Sigma}(\p_z\tilde\sigma\p_{\bz}\tilde\sigma+i\tilde\sigma j\sqrt{g})d^2z}\mathcal D_g\tilde\sigma=\left[\frac{\det'\Delta_g}{A_g}\right]^{-1/2}e^{-\frac1{4\pi^2}\int_{\Sigma\times\Sigma}j(z)G_g(z,z')|_{\rm reg}j(z)\sqrt g d^2z\sqrt gd^2z'},
\end{equation}
where the subscript reg indicates that for the integration along the diagonal in $\Sigma\time\Sigma$ we assume the coincident-point regularization \eq{reggreen}. In the case of path integral \eq{pathi} the current $j$ has the form
\begin{equation}\nonumber
j(z)=\frac{q}{4} R(z)+\frac 1{\sqrt\beta} B(z)-2\pi \sqrt\beta\sum_{n=1}^{N}\delta_g(z,z_n).\end{equation}

Plugging this into the general formula \eq{zqu} we arrive at the formula for $Z_{\rm qu}$,
\begin{multline}\label{zqu1}
Z_{\rm qu}=\left[\frac{\det'\Delta_g}{A_g}\right]^{-1/2}\cdot\exp\left(-\beta\sum_{n\neq m}^{N}G_{g}(z_n,z_m)-\beta\sum_{n=1}^{N}G_{g}^{\rm reg}(z_n)\right.\\\left.
+\frac{\sqrt\beta}{\pi}\sum_{n=1}^{N}\int_\Sigma G_{g}(z_n,z)\left(\frac q4R+\frac1{\sqrt\beta}B\right)\big|_z\sqrt{g}d^2z
\right.\\\left.-\frac1{4\pi^2}\int_{\Sigma\times\Sigma}\left(\frac q4R+\frac1{\sqrt\beta}B\right)\big|_zG_{g
}(z,z')\left(\frac q4R+\frac1{\sqrt\beta}B\right)\big|_{z'}\sqrt{g}d^2z\,\sqrt{g}d^2z'\right).
\end{multline}
Here the regularization prescription amounts to replacing divergent $G_g(z_n,z_n)$ on the diagonal by the regularized Green function $G_{g}^{\rm reg}(z_n)$.

The final result for the \eq{productV} is the product of $Z_{\rm cl}$ and $Z_{\rm qu}$. 
We have shown that the classical part $Z_{\rm cl}$ of the partition function is topological, i.e., independent of the choice of a Riemannian metric on $\Sigma$ and hermitian metric on $L_\NPhi$. Therefore all dependence on the metric and magnetic field is in the quantum part
$$Z_{\rm qu}:=Z_{\rm qu}[g,B].$$ 

In what follows we make a convenient choice of the Riemannian metric and magnetic field, where the path integral can be directly identified as a generating functional for the Laughlin states.

\subsection{Green functions and metrics on $\Sigma$}
\label{arak}

Given a Green function for some fixed metric $g_0$, the Green function for an arbitrary metric $g=e^{\lambda(z,\bz)} g_0$ in the same conformal class can be computed according to the formula
\begin{align}\label{trangreen}\nonumber
G_g(z,y)=G_{g_0}(z,y)-\frac1{A_g}\int_\Sigma\big(G_{g_0}(z,z')&+G_{g_0}(z',y)\big)\sqrt gd^2z'\\&
+\frac1{A_g^2}\int_{\Sigma\times\Sigma}G_{g_0}(z',z'')\sqrt gd^2z'\sqrt gd^2z'',
\end{align}
which is a consequence of Eqns.\ (\ref{green}, \ref{green1}). Therefore it is enough to know $G_{g_0}$ for some convenient metric.

Consider the canonical metric  $g_{\rm c}$, \eq{canmet}. The Green function can be constructed with the help of the Prime form \eqref{pform}. The following object is a globally defined real-valued $(-\frac12,-\frac12)$-differential both in $z$ and $y$,
\begin{equation}\label{Fdiff}
F(z,y)=e^{-2\pi\Im (z-y)\,(\Im\tau)^{-1}\Im (z-y)}\;|E(z,y)|^2,
\end{equation}
where $\Im (z-y)$ stands for $\int_z^y\Im\omega$. Single-valuedness of $F(z,y)$ on $\Sigma\times\Sigma$ follows from the transformation properties \eqref{transE}.
From \eq{neard} and  \eq{deltafunc} we obtain
\be\nonumber
\p_z\p_{\bz}\log F(z,y)=\pi\delta(z-y)-\g\, {g_c}_{z\bz}.
\ee
Using this formula and plugging $-\frac12\log F(z,y)$ for $G_{g_0}(z,y)$ into \eq{trangreen} after some algebra we arrive at the expression for the Green function for the canonical metric
\begin{multline}
\label{greencan}
G_{g_c}(z,y)=-\frac1{8\pi^2}\int_{\Sigma\times\Sigma}\log\left(\frac{F(z,y)F(z',z'')}{F(z,z')F(y,z'')}\right)\sqrt{g_c}d^2z'\sqrt{g_c}d^2z''
\\=-\frac1{4\pi^2}\int_{\Sigma\times\Sigma}\left(\log\left|\frac{E(z,y)E(z',z'')}{E(z,z')E(y,z'')}\right|
+2\pi \Im (z'-z)\,(\Im\tau)^{-1}\Im (z''-y)\right)\sqrt{g_c}d^2z'\sqrt{g_c}d^2z'',
\end{multline}
and one can check that the conditions \eqref{green}, \eqref{green1} are satisfied.

Another distinguished metric on $\Sigma$ is the Arakelov metric $g_a$, see Refs.\ \cite{Ar,Fal,DS,DP2}. It defined from the condition that its Ricci curvature is proportional to the canonical metric,
\be\label{Armetdef}
R_{z\bz}(g_a)=-\p_z\p_{\bz}\log {g_a}_{z\bz}=2(1-\g){g_c}_{z\bz}.
\ee
This defines the metric $g_a$ only up to an overall constant, which can be fixed from the following considerations.

First we define for any metric $g$ the so-called Arakelov-Green function $G^A_g$ according to
\begin{align}\label{greenarak}
&\Delta_gG^A_g(z,y)=-2\pi\delta_g(z,y)+\frac {R(z)}{4(1-\g)},\\\nonumber
&\int_\Sigma G^A_g(z,y) R(y)\sqrt gd^2y=0.
\end{align}
The Arakelov-Green function is related to the usual Green function as follows
\begin{multline}\nonumber
G^A_g(z,y)=G_g(z,y)-\frac1{8\pi(1-\g)}\int_\Sigma G_g(z,z')R\sqrt gd^2z'-\frac1{8\pi(1-\g)}\int_\Sigma G_g(y,z')R\sqrt gd^2z'\\+\frac1{(8\pi)^2(1-\g)^2}\int_{\Sigma\times\Sigma} G_g(z',z'')R\sqrt gd^2z'R\sqrt gd^2z''.
\end{multline} 
Both Green functions have the same near-diagonal behavior. Therefore the regularized Arakelov-Green function on the diagonal can be defined similar to \eq{reggreen},
\begin{equation}
\label{Arreg}
G_{g}^{A,\rm reg}(z)=\lim_{z\to y}\bigl(G^A_{g}(z,y)+\log d_{g}(z,y)\bigr),
\end{equation}
When expressed in terms of the Arakelov-Green function, the quantum part of the path integral $Z_{\rm qu}$, \eq{zqu1} assumes a particularly compact form,
\be\label{zqu2}
Z_{\rm qu}=\left[\frac{\det'\Delta_g}{A_g}\right]^{-1/2}\cdot e^{-\beta\sum_{n\neq m}^{N}G^A_{g}(z_n,z_m)-\beta\sum_{n=1}^{N}G_{g}^{A,\rm reg}(z_n)-\frac1{4\pi^2\beta}\int_{\Sigma\times\Sigma}G^A_{g
}(z,z')B\sqrt{g}d^2z\,B\sqrt{g}d^2z'},
\ee
which again holds for the arbitrary choices of $B$ and $g$.

It immediately follows from \eq{greenarak} that
\be\label{greenarar}
G^A_{g_a}(z,y)=G_{g_c}(z,y),
\ee 
in other words, the Arakelov-Green function for the Arakelov metric equals the standard Green function for the canonical metric \eqref{greencan}. 

Now we can choose the normalization of the Arakelov metric \eqref{Armetdef} so that the regularized Arakelov-Green function \eq{Arreg} identically vanishes $G^{A,\rm reg}_{g_a}(z)=0$. Taking into account \eq{greenarar} and \eq{greencan} we obtain
\begin{multline}\nonumber
{g_a}_{z\bz}=\exp\Bigl(-2\lim_{z\to y}\big(G^A_{g_a}(z,y)+\log|z-y|\big)\Bigr)\\
=\exp\Bigl(-\frac1{\pi}\int_\Sigma\log F(z,z')\,\sqrt{g_c}d^2z'+\frac1{4\pi^2}\int_{\Sigma\times\Sigma}\log F(z',z'')\,\sqrt{g_c}d^2z'\sqrt{g_c}d^2z''\Bigr).
\end{multline}
Plugging this back to \eq{greencan} we derive the formula 
\be\label{ArG}
G^A_{g_a}(z,y)=-\frac12\log\Big({g_a}_{z\bz}(z)^\frac12{g_a}_{z\bz}(y)^\frac12\,F(z,y)\Big).
\ee 
Recall that $F(z,y)$ is the $(-\frac12,-\frac12)$-differential both in $z$ and $y$ \eqref{Fdiff}. Hence the expression inside the logarithm transforms as a scalar on $\Sigma\times\Sigma$.

The following useful formula for the Arakelov metric holds, see, e.g., Refs.\ \cite[Eq.\ 5.1.28]{GMMOS} and \cite[Eq.\ 1.31]{Fay92},
\be\label{garak}
{g_a}_{z\bz}(z,\bz)=c^{\,2}|\sigma(z)|^{\frac4\g}\;e^{\,\frac{4\pi}{\g(\g-1)}\Im\Delta^z\,(\Im\tau)^{-1}\Im\Delta^z},
\ee
where $\Delta^z$ is the vector of Riemann constants \eq{riemannCz} and $\sigma(z)$ is the holomorphic $\frac\g2$-differential introduced in \eq{sigmafunc}, not to be confused with the free field $\sigma(z,\bz)$. The constant $c$ here is $z$-independent, but depends on the surface $\Sigma$. For the explicit formula, which we will not need in this paper, we refer to Ref. \cite[Eq.\ 1.21]{Fay92}.

\subsection{Laughlin states from path integral}

Now we choose the metric on $\Sigma$ in the path integral to be the Arakelov metric $g_a$. The convenient choice of the magnetic field $(1,1)$ form is the canonical metric, as in \eq{Fzz}. For the magnetic field $B$ this translates into the relation
\be\label{magnchoice}
B_a=\frac{\NPhi}{8\pi(1-\g)}R(g_a),
\ee
where the constant is fixed by the quantization condition \eqref{quant} taking into account the Gauss-Bonnet formula $\int_\Sigma R\sqrt gd^2z=8\pi(1-\g)$. 

With this choice of the magnetic field the last term in the exponent in \eq{zqu2} vanishes and the path integral \eq{productV} has the following form
\begin{multline}\label{bos2}
\mathcal V\bigl(g_a,B_a,\{z_n\}\bigr)
=\left[\frac{\det'\Delta_{g_a}}{A_{g_a}\det\Im\tau}\right]^{-1/2}\cdot e^{-\beta\sum_{n\neq m}^{N}G^A_{g_a}(z_n,z_m)}\\\cdot e^{-\frac{2\pi}\beta\Im D_0\,(\Im \tau)^{-1}\Im D_0}\cdot\sum_{\varepsilon',\varepsilon''\in\{0,\frac12\}^\g}\;\sum_{r\in(\mathbb N^+_{\leqslant\beta})^\g}\;e^{4\pi i(\varepsilon',\varepsilon'')}\left|\vartheta\left[\begin{array}{c}\scriptstyle\frac r\beta+\varepsilon'+\delta'\\\scriptstyle\varepsilon''+\beta\delta''\end{array}\right](D_0,\beta\tau)\right|^2,
\end{multline}
where we use the notation \eq{Zcl2} and \eqref{divD0}.

Using \eq{ArG} for the Arakelov-Green function and the following relation
\begin{align}\nonumber
\sum_{n<m}^N\Im(z_n-z_m)\,(\Im\tau)^{-1}\Im(z_n-z_m)&
\\\nonumber
=-\Im \sum_{n=1}^Nz_n\,(\Im\tau)^{-1}\Im \sum_{n=1}^Nz_n&+N\sum_{n=1}^N\Im z_n\,(\Im\tau)^{-1}\Im z_n
\end{align}
we can extract a more detailed formula for the exponent in the first line in \eq{bos2},
\begin{multline}\nonumber
e^{-\beta\sum_{n\neq m}^{N}G^A_{g_a}(z_n,z_m)}
=e^{2\pi\beta\,\Im \sum_{n=1}^Nz_n\,(\Im\tau)^{-1}\Im \sum_{n=1}^Nz_n}\prod_{n<m}^N\big|E(z_n,z_m)\big|^{2\beta}\\\cdot\prod_{n=1}^Ne^{-2\pi\beta N\,\Im z_n\,(\Im\tau)^{-1}\Im z_n}\big({g_a}_{z\bz}(z_n,\bz_n)\big)^{\frac{\beta(N-1)}2}
\end{multline}
Now, using the explicit formula \eq{garak} for the $g_a$ and after simple, but tedious algebra we obtain 
\begin{multline}\nonumber
\mathcal V\bigl(g_a,B_a,\{z_n\}\bigr)\\
=\left[\frac{\det'\Delta_{g_a}}{A_{g_a}\det\Im\tau}\right]^{-1/2}\cdot c^{\,N(\beta N+2s-\beta)}\cdot e^{-\frac{2\pi}{\beta(\g-1)^2}\Im \left((\g-1)\,{\rm div}\, \L+\deg\L\cdot\Delta\right)\,(\Im\tau)^{-1}\Im \left((\g-1)\,{\rm div}\, \L+\deg\L\cdot\Delta\right)}\\
\cdot\frac1{2^\g}\sum_{\varepsilon',\varepsilon''\in\{0,\frac12\}^\g}\;\sum_{r\in(\mathbb N^+_{\leqslant\beta})^\g}\;e^{4\pi i(\varepsilon',\varepsilon'')}\left|\vartheta\left[\begin{array}{c}\scriptstyle\frac r\beta+\varepsilon'+\delta' \\\scriptstyle\varepsilon''+\beta\delta''\end{array}\right]\Big(\beta{\textstyle \sum_{n=1}^N} z_n-\beta\Delta-2sD_\delta-D_\m-\varphi,\beta\tau\Big)\right|^2\\
\cdot \prod_{n<m}^N\big|E(z_n,z_m)\big|^{2\beta}
\cdot\prod_{n=1}^N |\sigma(z_n)|^{\frac2\g(\beta N+2s-\beta)}\cdot\prod_{n=1}^N h^\NPhi(z_n,\bz_n)\big({g_a}_{z\bz}(z_n,\bz_n)\big)^{-s},
\end{multline}
where we use the hermitian metric $h^\NPhi$ defined in \eq{here}. 

Here we recognize the sum over the hermitian norms of the Laughlin states \eq{FL}. The additional structure is the graded sum over half-integer characteristics $(\varepsilon',\varepsilon'')$, relative to the reference odd spin structure $(\delta',\beta\delta'')$ corresponding to the divisor \eq{canon}. This can also be interpreted as twisting the flat line bundle in \eq{rmL} by half-integer points in $Jac(\Sigma)$. Thus we adopt the notation $F_r^{\varepsilon',\varepsilon''}$ for the corresponding Laughlin states,
\begin{multline}\label{laughlinst}
F_r^{\varepsilon',\varepsilon''}(z_1,...,z_N)=\vartheta\left[\begin{array}{c}\scriptstyle\frac r\beta+\varepsilon' \\\scriptstyle\varepsilon''\end{array}\right]\Big(\beta{\textstyle \sum_{n=1}^N} z_n-\beta\Delta-2sD_\delta-D_\m-\varphi,\beta\tau\Big)\\\cdot \prod_{n<m}^NE(z_n,z_m)^{\beta}
\cdot\prod_{n=1}^N \sigma(z_n)^{\frac1\g(\beta N+2s-\beta)}.
\end{multline}
This concludes the derivation of our final result, which we now summarize in the following proposition.
\begin{prop}\label{nugB}
The path integral \eq{pathi} computed at the Arakelov metric $g_a$ and for the choice of the magnetic field \eq{magnchoice} has the following form
\begin{multline}\label{final}
\mathcal V\bigl(g_a,B_a,\{z_n\}\bigr)\\
=\left[\frac{\det'\Delta_{g_a}}{A_{g_a}\det\Im\tau}\right]^{-1/2}\cdot c^{\,N(\beta N+2s-\beta)}\cdot e^{-\frac{2\pi}{\beta(\g-1)^2}\Im \left((\g-1)\,{\rm div}\, \L+\deg\L\cdot\Delta\right)\,(\Im\tau)^{-1}\Im \left((\g-1)\,{\rm div}\, \L+\deg\L\cdot\Delta\right)}\\
\cdot\frac1{2^\g}\sum_{\varepsilon',\varepsilon''\in\{0,\frac12\}^\g}\;\sum_{r\in(\mathbb N^+_{\leqslant\beta})^\g}\;e^{4\pi i(\varepsilon',\varepsilon'')}\Vert F_r^{\varepsilon'+\delta',\varepsilon''+\beta\delta''}(z_1,...,z_N)\Vert_h^2,
\end{multline}
where the Laughlin states are given by \eq{laughlinst} and notation for the point-wise hermitian norm for each $z_n$ is according to \eq{norm1}. 
\end{prop}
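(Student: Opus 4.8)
The plan is to assemble the identity \eqref{final} from the ingredients already prepared in \S\ref{freef}, and then specialize to the Arakelov metric and the magnetic field \eqref{magnchoice}. First I would invoke the factorization $\mathcal V(g,B,\{z_n\})=Z_{\rm cl}\cdot Z_{\rm qu}$ of \eqref{productV}, which follows from the decomposition $\sigma=\sigma_{cl}+\sigma_{qu}$, the additivity of the action on this decomposition, and the zero-mode integration that enforces the charge-balance condition \eqref{N}. For the classical part I would use the computation already carried out: starting from the instanton sum \eqref{instsum}, with the linear curvature and magnetic terms defined through \eqref{Rsigma} and \eqref{defB} and evaluated — via Prop.\ \ref{indep1} and its magnetic counterpart — in terms of divisor data, one arrives at \eqref{doublesum}; a Poisson resummation in $m'$ then produces the graded sum \eqref{Zcl1}, which I rewrite in the shifted form \eqref{Zcl2} in terms of $D_0$, \eqref{divD0}. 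The essential structural input here is that $Z_{\rm cl}$ is topological — independent of the Riemannian metric and of the hermitian metric on $L_\NPhi$ — so that all metric and magnetic-field dependence is isolated in $Z_{\rm qu}$.

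Next I would treat the quantum part via the Gaussian formula \eqref{zqu} with current $j=\tfrac q4 R+\tfrac1{\sqrt\beta}B-2\pi\sqrt\beta\sum_n\delta_g(z,z_n)$, giving \eqref{zqu1}, and then rewrite it compactly in terms of the Arakelov--Green function as \eqref{zqu2}, valid for arbitrary $B$ and $g$. Now comes the specialization: choose $g=g_a$ with the normalization that makes $G^{A,\rm reg}_{g_a}$ vanish identically, and choose $B=B_a$ proportional to $R(g_a)$ as in \eqref{magnchoice}. By the second defining property of the Arakelov--Green function in \eqref{greenarak}, the $BB$-term in the exponent of \eqref{zqu2} then vanishes, and the self-energy contributions $G^{A,\rm reg}_{g_a}(z_n)$ drop out, leaving only the pair interaction $-\beta\sum_{n\neq m}G^A_{g_a}(z_n,z_m)$ together with the determinantal prefactor.

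Finally I would substitute the closed-form expressions \eqref{ArG} for $G^A_{g_a}(z,y)$ in terms of the Prime form and ${g_a}_{z\bz}$, and \eqref{garak} for ${g_a}_{z\bz}$ in terms of $\sigma(z)$ and the Riemann constants $\Delta^z$. Using the elementary identity relating $\sum_{n<m}\Im(z_n-z_m)(\Im\tau)^{-1}\Im(z_n-z_m)$ to $\Im\sum_n z_n\,(\Im\tau)^{-1}\Im\sum_n z_n$ and $\sum_n\Im z_n(\Im\tau)^{-1}\Im z_n$, the factor $e^{-\beta\sum_{n\neq m}G^A_{g_a}}$ breaks up into $\prod_{n<m}|E(z_n,z_m)|^{2\beta}$, powers of $|\sigma(z_n)|$, and Gaussian factors in $\Im z_n$ and $\Im\Delta^{z_n}$. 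Combining these with the $e^{-\frac{2\pi}{\beta}\Im D_0(\Im\tau)^{-1}\Im D_0}$ prefactor of \eqref{Zcl2} and reorganizing, the $z$-dependent part of each summand is recognized as the pointwise hermitian norm $\Vert F_r^{\varepsilon'+\delta',\varepsilon''+\beta\delta''}\Vert_h^2$ in the metric \eqref{norm1}, where the magnetic hermitian metric $h^\NPhi$ is precisely \eqref{here} — the unique choice fixed by the three conditions listed after \eqref{here} (namely \eqref{Fzz}, scalar-valuedness of the norm of a section of $\L$, and $P_0$-independence). All surviving $z$-independent factors — the zeta-determinant ratio, the power of $c$ from \eqref{garak}, and the quadratic form in $(\g-1)\,{\rm div}\,\L+\deg\L\cdot\Delta$ — collect into the overall constant, which matches the prefactor displayed in \eqref{final}; the notation $\mathcal N$ in the introduction refers to the same constant together with the $2^{-\g}$.

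The main obstacle is the bookkeeping in the last step: one must verify that every quadratic exponential — those produced by the theta-norm convention $\Vert\vartheta(e,\tau)\Vert^2=e^{-2\pi\Im e\,(\Im\tau)^{-1}\Im e}|\vartheta(e,\tau)|^2$ evaluated at period matrix $\beta\tau$, by \eqref{garak}, by the identity above, and by the $\Im D_0$ prefactor — combines so that the $z$-dependent remainder is exactly $\prod_n h^\NPhi(z_n,\bz_n)\,({g_a}_{z\bz}(z_n,\bz_n))^{-s}$ with $h^\NPhi$ as in \eqref{here}, with no residual Gaussian in $\Im z_n$ and no cross-terms left over. One also has to confirm that the shift of the theta argument by $2sD_\delta$ coming from the choice $\tilde D=D_\delta$ in the curvature term, combined with the shift of the characteristic by $(\delta',\beta\delta'')$, reproduces exactly the spin-structure-twisted states $F_r^{\varepsilon'+\delta',\varepsilon''+\beta\delta''}$ of \eqref{laughlinst}, and that the phase weight $e^{4\pi i(\varepsilon',\varepsilon'')}$ is carried unchanged through the Poisson resummation. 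This is the ``simple but tedious algebra'' alluded to in the text; the conceptual content is entirely contained in \eqref{zqu2}, \eqref{ArG}, \eqref{garak}, and the characterization \eqref{here} of $h^\NPhi$.
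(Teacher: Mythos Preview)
Your proposal is correct and follows essentially the same route as the paper: factorize $\mathcal V=Z_{\rm cl}\cdot Z_{\rm qu}$, evaluate $Z_{\rm cl}$ via the instanton sum and Poisson resummation to reach \eqref{Zcl2}, express $Z_{\rm qu}$ through the Arakelov--Green function as in \eqref{zqu2}, then specialize to $g_a$, $B_a$ so that the self-energy and $BB$ terms drop, and finally substitute \eqref{ArG}, \eqref{garak} and the quadratic identity in $\Im z_n$ to reassemble each summand as $\Vert F_r^{\varepsilon'+\delta',\varepsilon''+\beta\delta''}\Vert_h^2$ with $h^\NPhi$ given by \eqref{here}. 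The only content you have not written out explicitly is the bookkeeping you yourself flag as the ``simple but tedious algebra,'' which the paper likewise leaves to the reader.
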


\begin{rem}
When $\beta=1$, the path integral  \eq{pathi} has a dual fermionic representation \cite{ABMNV,VV}, which we briefly recall here. One considers two independent complex-valued fermionic fields $b,c$, where $b$ transforms as (not necessarily holomorphic) section of the holomorphic line bundle $\L$ and $c$ transforms as the section of the dual bundle $\L^{-1}\otimes K$. The quadratic action is
$$
S=\frac1{2\pi}\int_\Sigma c\,\bp_{\L}b+\bar c\,{\bp_{\L}}^*\bar b.
$$
For $\deg \L$ large enough, $\L^{-1}\otimes K$ has no holomorphic sections and thus only the $b$-field has zero modes. Hence the non-zero correlation function will involve the insertions $N$ of fermion doublets $\Vert b\Vert^2_h=(b\bar b)\cdot h$ at positions $z_1,...,z_N$. Computing the fermionic path integral one obtains the result
\be\label{spinsferm}
\langle \Vert b(z_1)\Vert^2_h...\Vert b(z_N)\Vert^2_h\rangle=\frac{\det'\bp^+_\L\bp_\L^{\phantom{a}}}{\det\langle s_n,s_m\rangle_{L^2}}\,\Vert\det s_n(z_m)\Vert_h^2\,\text{.}
\ee
The equivalence with the bosonic correlation function \eq{pathi} is established according to the bosonization rule, saying that the fermion doublet is equivalent to the vertex operator in bosonic theory 
$$\Vert b(z)\Vert^2_h\sim e^{i\sigma(z,\bz)}.$$
One should also restrict to a fixed the spin structure sector $\delta$ in the bosonic path integral. Then
\be\nonumber
\left\langle \Vert b(z_1)\Vert^2_h\cdots\Vert b(z_N)\Vert^2_h\right\rangle_{\rm ferm}=\left\langle e^{i\sigma(z_1,\bz_1)}\cdots e^{i\sigma(z_N,\bz_N)}\right\rangle_{\rm bos}=\mathcal V_{\beta=1}\bigl(g_a,B_a,\{z_n\}\bigr)\big|_{\varepsilon=\delta}\,\text{.}
\ee
Taking $\beta=1$ and restricting to $\varepsilon=\delta$ in the path integral $\mathcal V$, \eq{bos2}, we reproduce bosonization formula \eq{bos1}. More explicitly, 
\begin{multline}\nonumber
\frac{\det'\bp^+_\L\bp_\L^{\phantom{a}}}{\det\langle s_n,s_m\rangle_{L^2}}\,\det s_n(z_m)\\=\left[\frac{\det'\Delta_{g_a}}{A_{g_a}\det\Im\tau}\right]^{-1/2}\cdot c^{\,N( N+2s-1)}\cdot e^{-\frac{2\pi}{(\g-1)^2}\Im \left((\g-1)\,{\rm div}\, \L+\deg\L\cdot\Delta\right)\,(\Im\tau)^{-1}\Im \left((\g-1)\,{\rm div}\, \L+\deg\L\cdot\Delta\right)}\\
\cdot\left|\vartheta\Big({\textstyle \sum_{n=1}^N} z_n-\Delta-2sD_\delta-D_\m-\varphi,\tau\Big)\right|^2
\cdot \prod_{n<m}^N\big|E(z_n,z_m)\big|^{2}
\cdot\prod_{n=1}^N |\sigma(z_n)|^{\frac2\g(N+2s-1)},
\end{multline}
where $N$ here is given by \eq{NNPhi}.
In this form the bosonization formula should be compared to \cite[Cor.\ 5.12]{Fay92}, taking into account a different normalization of the hermitian metric on $\L$ \cite[Eq.\ 1.36]{Fay92}
\end{rem}


\end{document}